\theoremstyle{remark}
\newtheorem{rem}{\protect\remarkname}
\theoremstyle{definition}
\newtheorem{defn}{\protect\definitionname}
\theoremstyle{plain}
\newtheorem{lem}{\protect\lemmaname}
  \newenvironment{proof}[1][\proofname]{\par
    \normalfont\topsep6\p@\@plus6\p@\relax
    \trivlist
    \itemindent\parindent
    \item[\hskip\labelsep
          \scshape
      #1]\ignorespaces
  }{%
    \endtrivlist\@endpefalse
  }
  \providecommand{\proofname}{Proof}
\theoremstyle{plain}
\newtheorem{thm}{\protect\theoremname}
\theoremstyle{plain}
\newtheorem{cor}{\protect\corollaryname}
\journal{Automatica}
\tikzset{
  every overlay node/.style={
    draw=white,anchor=north west,
  },
}
\date{May 18, 2021}
\providecommand{\corollaryname}{Corollary}
\providecommand{\definitionname}{Definition}
\providecommand{\lemmaname}{Lemma}
\providecommand{\remarkname}{Remark}
\providecommand{\theoremname}{Theorem}
\begin{document}

\begin{frontmatter}{}

\title{Cluster Consensus on Matrix-weighted Switching Networks}

\author[rvt]{Lulu~Pan}

\author[rvt]{Haibin~Shao \corref{cor1}}

\author[focal]{Mehran~Mesbahi}

\author[rvt]{Dewei~Li}

\author[rvt]{Yugeng~Xi}

\cortext[cor1]{Corresponding author.}

\address[rvt]{Department of Automation, Shanghai Jiao Tong University and Key Laboratory
of System Control and Information Processing, \\Ministry of Education,
Shanghai, China 200240}

\address[focal]{William E. Boeing Department of Aeronautics and Astronautics, University
of Washington,  Seattle, WA, USA 98195-2400}
\begin{abstract}
This paper examines the cluster consensus problem of multi-agent systems
on matrix-weighted switching networks. Necessary and/or sufficient
conditions under which cluster consensus can be achieved are obtained
and quantitative characterization of the steady-state of the cluster
consensus are provided as well. Specifically, \textcolor{black}{if
the underlying network switches amongst finite number of networks,
a necessary condition for cluster consensus of multi-agent system
on switching matrix-weighted networks is firstly presented, it is
shown that the steady-state of the system lies in the intersection
of the null space of matrix-valued Laplacians corresponding to all
switching networks. Second, if the underlying network switches amongst
infinite number of networks, the matrix-weighted integral network
is employed to provide sufficient conditions for cluster consensus
and the quantitative characterization of the corresponding steady-state
of the multi-agent system, using null space analysis of matrix-valued
Laplacian related of integral network associated wit}h the switching
networks. In particular, conditions for the bipartite consensus under
the matrix-weighted switching networks are examined. Simulation results
are finally provided to demonstrate the theoretical analysis.
\end{abstract}
\begin{keyword}
Matrix-weighted network \sep cluster consensus \sep switching network
\sep integral network \sep bipartite consensus
\end{keyword}

\end{frontmatter}{}

\section{Introduction}

Achieving consensus via local interactions amongst agents turns out
to be an important paradigm in distributed control of multi-agent
networks \citet{mesbahi2010graph,olfati2007consensus,degroot1974reaching,zhang2018fully}.
However, it has long been assumed that the edges, representing the
interaction between neighboring agents, are weighted by scalars, which
apparently ignores the interdependencies among multi-dimensional states
of neighboring agents. Recently, matrix-weighted network is introduced
to characterize the complicated interactions amongst the high-dimensional
states of agents \citet{trinh2018matrix,sun2018dimensional,pan2018bipartite,tuna2016synchronization}.
In fact, matrix-weighted networks naturally arise in scenarios such
as graph effective resistance and its applications in distributed
control and estimation \citet{tuna2017observability,barooah2008estimation},
opinion dynamics \citet{friedkin2016network,ye2020continuous}, bearing-based
formation control \citet{zhao2015translational}, coupled oscillators
dynamics \citet{tuna2019synchronization}, and consensus and synchronization
\citet{trinh2018matrix,tuna2016synchronization,pan2018bipartite,su2019bipartite}. 

As opposed to scalar-weighted networks, network connectivity does
not translate to achieving consensus for matrix-weighted networks.
Rather than achieving consensus, the expansion of the null space of
the associated matrix-valued Laplacian enables the multi-agent system
on matrix-weighted networks to achieve cluster consensus (or clustering),
even if the underlying network is connected \citet{trinh2018matrix}.
This elegant property enables one to design desired cluster structures
of for multi-agent systems by elaborately investigating the connection
between the matrix-valued edge weights and the null space of matrix-valued
Laplacian. Notably, achieving the cluster synchronization in coupled
oscillator systems has been shown to be closely related to memory
process in human brain \citet{skardal2011cluster,ashwin2007dynamics,fell2011role,hipp2012large}.
However, achieving desired cluster consensus is not trivial in the
case of scalar-weighted networks, where network-wide information and
specific control strategies have to be involved \citet{wu2008cluster,xia2011clustering}.
In contrast, cluster consensus can be naturally achieved under matrix-weighted
networks. Recently, the conditions of achieving cluster consensus
on fixed matrix-weighted networks were reported in \citet{trinh2018matrix}.
\textcolor{black}{Nevertheless, the underlying network of a multi-agent
system can vary over time in a great variety of situations \citet{cao2008reaching,olfati2004consensus,ren2005consensus,cao2011necessary,meng2018uniform,anderson2016convergence,cao2012overview}.
}The continuous-time and discrete-time consensus problem on time-varying
matrix-weighted networks are discussed \citet{Pan2021Tac,van2020discrete}.
Nevertheless, to the best of our knowledge, the conditions for cluster
consensus on dynamic matrix-weighted networks are still lacking. 

This paper is intended to provide quantitative characterization for
cluster consensus of multi-agent systems on matrix-weighted switching
networks. \textcolor{black}{The contribution of this paper is threefold.
First, for the case that the underlying network switches amongst finite
number of networks, necessary condition for the cluster consensus
of matrix-weighted switching networks has been exploited and an essential
connection between cluster consensus value and the Laplacian matrices
of matrix weighted switching networks has been established. Second,
for the case that the underlying network switches amongst infinite
number of networks, sufficient conditions for cluster consensus are
obtained by examining the structure of null spaces associated to the
matrix-valued Laplacian of the associated integral network. Finally,
we provide conditions for a class of specific cluster consensus, namely
bipartite consensus, under the matrix-weighted switching networks,
and graph-theoretic condition is obtained. The }results obtained in
this paper provide further insight into the collective behavior of
multi-agent systems.

\textcolor{black}{The remainder of the paper is organized as follows.
Preliminaries and problem formulation are introduced in $\mathsection$\ref{sec:Preliminaries}
and $\mathsection$\ref{sec:Problem-Formulation}, respectively, followed
by the cluster consensus and bipartite consensus conditions in $\mathsection$\ref{sec:Consensus-on-General}.
Simulation examples are presented in $\mathsection$\ref{sec:Simulation-Results};
we provide concluding remarks in $\mathsection$\ref{sec:Conclusion}.}

\section{\textcolor{black}{Preliminaries\label{sec:Preliminaries}}}

\subsection{\textcolor{black}{Notations}}

\textcolor{black}{Let $\mathbb{R}$, $\mathbb{N}$ and $\mathbb{Z}_{+}$
be the set of real numbers, natural numbers and positive integers,
respectively. For $n\in\mathbb{Z}_{+}$, denote $\underline{n}=\left\{ 1,2,\ldots,n\right\} $.
A symmetric matrix $M\in\mathbb{R}^{n\times n}$ is positive definite(Resp.
negative definite), denoted by $M\succ0$ (Resp. $M\prec0$), if $\boldsymbol{z}^{\top}M\boldsymbol{z}>0$
(Resp. $\boldsymbol{z}^{\top}M\boldsymbol{z}<0$) for all nonzero
$\boldsymbol{z}\in\mathbb{\mathbb{R}}^{n}$, and is positive (Resp.
negative) semi-definite, denoted by $M\succeq0$ (Resp. $M\preceq0$),
if $\boldsymbol{z}^{\top}M\boldsymbol{z}\ge0$ (Resp. $\boldsymbol{z}^{\top}M\boldsymbol{z}\le0$)
for all $\boldsymbol{z}\in\mathbb{\mathbb{R}}^{n}$. The null space
of a matrix $M\in\mathbb{R}^{n\times n}$ is denoted by $\text{{\bf null}}(M)=\left\{ \boldsymbol{z}\in\mathbb{R}^{n}|M\boldsymbol{z}=\boldsymbol{0}\right\} $.
$\boldsymbol{1_{n}}\in\mathbb{R}^{n}$ and $0_{d\times d}\in\mathbb{R}^{d\times d}$
designate the vector whose components are all $1$'s and the matrix
whose components are all $0$'s, r}espectively. Let $\left\lfloor x\right\rfloor $
denote the greatest integer less than or equal to $x\in\mathbb{R}$.
The sign function $\text{{\bf sgn}}(\cdot):\mathbb{R}^{n\times n}\mapsto\left\{ 0,-1,1\right\} $
satisfies $\text{{\bf sgn}}(M)=1$ if $M\succeq0$ or $M\succ0$,
$\text{{\bf sgn}}(M)=-1$ if $M\preceq0$ or $M\prec0$, and $\text{{\bf sgn}}(M)=0$
if $M=0_{d\times d}$. 

\subsection{Graph Theory}

A matrix-weighted switching graph is denoted by $\mathcal{G}(t)=(\mathcal{V},\mathcal{E}(t),A(t))$,
where $t$ refers to time index. The node and edge sets of $\mathcal{G}$
are denoted by $\mathcal{V}=\left\{ 1,2,\ldots,n\right\} $ and $\mathcal{E}(t)\subseteq\mathcal{V}\times\mathcal{V}$,
respectively. The weight on the edge $(i,j)\in\mathcal{E}(t)$ is
encoded by the symmetric matrix $A_{ij}(t)\in\mathbb{R}^{d\times d}$
such that $\mid A_{ij}(t)\mid\succeq0$ or $\mid A_{ij}(t)\mid\succ0$,
and $A_{ij}(t)=0_{d\times d}$ for $(i,j)\not\in\mathcal{E}(t)$.\textbf{
}Thereby, the matrix-valued adjacency matrix $A(t)=[A_{ij}(t)]\in\mathbb{R}^{dn\times dn}$
is a block matrix such that the block located in its $i$-th row and
the $j$-th column is $A_{ij}(t)$. It is assumed that $A_{ij}(t)=A_{ji}(t)$
for all $i\not\not=j\in\mathcal{V}$ and $A_{ii}(t)=0_{d\times d}$
for all $i\in\mathcal{V}$. A bipartition of node set $\mathcal{V}$
of matrix-weighted switching graph $\mathcal{G}(t)=(\mathcal{V},\mathcal{E}(t),A(t))$
at time $t$ is two subsets of nodes $\mathcal{V}_{i}\subset\mathcal{V}$
where $i\in\underline{2}$ such that $\mathcal{V}_{1}\cup\mathcal{V}_{2}=\mathcal{V}$
and $\mathcal{V}_{1}\cap\mathcal{V}_{2}=\textrm{Ø}$. 

A path of $\mathcal{G}(t)$ at time $t$ is a sequence of edges of
the form $(i_{1},i_{2}),(i_{2},i_{3}),\ldots,(i_{p-1},i_{p})$, where
nodes $i_{1},i_{2},\ldots,i_{p}\in\mathcal{V}$ are distinct; in this
case we say that node $i_{p}$ is reachable from $i_{1}$. The graph
$\mathcal{G}(t)$ is connected at time $t$ if any two distinct nodes
in $\mathcal{G}(t)$ are reachable from each other. A tree is a connected
graph with $n\ge2$ nodes a\textcolor{black}{nd $n-1$ edges where
$n\in\mathbb{Z}_{+}$. For matrix-weighted }switching\textcolor{black}{{}
networks, we adopt the following terminology. An edge $(i,j)\in\mathcal{E}(t)$
at time $t$ is positive (respectively, negative) definite or positive
(respectively, negative) semi-definite if the corresponding edge weight
$A_{ij}(t)$ is positive (respectively, negative) definite or positive
(respectively, negative) semi-definite. A positive-negative path of
$\mathcal{G}(t)$ at time $t$ is a path such that every edge in this
path is either positive definite or negative definite. A positive-negative
tree of $\mathcal{G}(t)$ at time $t$ is a tree such that every edge
in this tree is either positive definite or negative definite. A positive-negative
spanning tree of $\mathcal{G}(t)$ at time $t$ is a positive-negative
tree containing all nodes in $\mathcal{G}(t)$. }

\section{\textcolor{black}{Problem Formulation\label{sec:Problem-Formulation}}}

\textcolor{black}{Consider a multi-agent system consisting of $n>1$
($n\in\mathbb{Z}_{+}$) agents whose interaction network is characterized
by a matrix-weighted switching graph $\mathcal{G}(t)=(\mathcal{V},\mathcal{E}(t),A(t))$.
Denote the state of an agent $i\in\mathcal{V}$ as $\boldsymbol{x}_{i}(t)=[x_{i1}(t),\ldots,x_{id}(t)]^{\top}\in\mathbb{R}^{d}$,
evolving according to the protocol 
\begin{equation}
\dot{\boldsymbol{x}}_{i}(t)=-\sum_{j\in\mathcal{N}_{i}(t)}\mid A_{ij}(t)\mid(\boldsymbol{x}_{i}(t)-\text{{\bf sgn}}(A_{ij}(t))\boldsymbol{x}_{j}(t)),\thinspace i\in\mathcal{V},\label{equ:matrix-consensus-protocol}
\end{equation}
where $\mathcal{N}_{i}(t)=\left\{ j\in\mathcal{V}\,|\,(i,j)\in\mathcal{E}(t)\right\} $
denotes the neighbor set of agent $i\in\mathcal{V}$ at time $t$.}
\textcolor{black}{Note that \eqref{equ:matrix-consensus-protocol}
degenerates into the scalar-weighted case when $A_{ij}(t)=a_{ij}(t)I$,
where $a_{ij}(t)\in\mathbb{R}$ and $I$ denotes the $d\times d$
identity matrix. }

\textcolor{black}{Let $D(t)=\text{{\bf diag}}\left\{ D_{1}(t),\cdots,D_{n}(t)\right\} \in\mathbb{R}^{dn\times dn}$
be the matrix-valued degree matrix of $\mathcal{G}(t)$, where $D_{i}(t)=\sum_{j\in\mathcal{N}_{i}}\mid A_{ij}(t)\mid\in\mathbb{R}^{d\times d}$
and $i\in\mathcal{V}$. The matrix-valued Laplacian is subsequently
defined as $L(t)=D(t)-A(t)$. The dynamics of the overall multi-agent
system now admits the form,
\begin{equation}
\dot{\boldsymbol{x}}(t)=-L(t)\boldsymbol{x}(t),\label{equ:matrix-consensus-overall}
\end{equation}
where $\boldsymbol{x}(t)=[\boldsymbol{x}_{1}^{\top}(t),\ldots,\boldsymbol{x}_{n}^{\top}(t)]^{\top}\in\mathbb{R}^{dn}$.}
\begin{rem}
It is well-known that network connectivity plays a central role in
determining consensus for scalar-weighted time-invariant networks
\citet{olfati2004consensus}. \textcolor{black}{However, for the matrix-weighted
time-invariant networks,} network connectivity is only a necessary
condition for consensus on matrix-weighted networks, it is possible
to achieve cluster consensus even if the network is connected, which
is related with the properties of the null space of matrix-valued
Laplacian matrix for time-invariant network \citet{trinh2018formation}.
\end{rem}
\begin{defn}[Cluster Consensus]
\textcolor{black}{\label{def:cluster-consensus} The multi-agent
system \eqref{equ:matrix-consensus-overall} admits cluster consensus,
if there exists a partition of node set $\mathcal{V}$, say $\mathcal{V}_{1},\ldots,\mathcal{V}_{l}$
where $l\in\mathbb{Z}_{+}$ and $l\le n$, such that all agents belonging
to the same partition achieve consensus, while for any two agents
$i$ and $j$ belonging to two different partitions, ${\color{black}{\color{black}{\color{blue}{\color{red}{\color{black}\lim{}_{t\rightarrow\infty}\boldsymbol{x}_{i}(t)}}}\neq{\color{blue}{\color{red}{\color{black}\lim{}_{t\rightarrow\infty}\boldsymbol{x}_{j}(t)}}}}}$.
Each $\mathcal{V}_{i},\,i\in\underline{l}$ is referred to as a cluster.
In particular, the cluster consensus is referred to as consensus and
bipartite consensus if $l=1$ and $l=2$, respectively.}
\end{defn}
\textcolor{black}{The }\textbf{\textcolor{black}{Definition}}\textcolor{black}{{}
\ref{def:cluster-consensus} implies that there exists a vector $\boldsymbol{x}^{*}\in\mathbb{R}^{dn}$
such that $\lim_{t\rightarrow\infty}\boldsymbol{x}(t)=\boldsymbol{x}^{*}$,
where $\boldsymbol{x}^{*}$ is influenced by the initial states of
the multi-agent system \eqref{equ:matrix-consensus-overall}. This
work aims to investigate conditions under which the cluster consensus
state of mul}ti-agent system \eqref{equ:matrix-consensus-overall}
on matrix-weighted switching networks can be quantitatively characterized.
We\textcolor{black}{{} adopt the following assumptions on the underlying
matrix-weighted switching network \citet{olfati2004consensus,ren2005consensus,cao2011necessary}.}

\textbf{\textcolor{black}{Assumption 1.}}\textcolor{black}{{} There
exists a sequence $\{t_{k}|k\in\mathbb{N}\}$ such that $\lim_{k\rightarrow\infty}t_{k}=\infty$
and the dwell time satisfies $\triangle t_{k}=t_{k+1}-t_{k}\geq\alpha$
for all $k\in\mathbb{N}$, where $\alpha>0$, $t_{0}=0$, and $\mathcal{G}(t)$
is time-invariant for $t\in[t_{k},t_{k+1})$ for all $k\in\mathbb{N}$. }

\textbf{\textcolor{black}{Assumption 2.}}\textcolor{black}{{} In addition
to }\textbf{\textcolor{black}{Assumption }}\textcolor{black}{1, the
switching networks $\mathcal{G}(t)$ is chosen from a finite set $\left\{ \mathcal{G}_{1},\mathcal{G}_{2},\ldots,\mathcal{G}_{M}\right\} $
for some $M\in\mathbb{Z}_{+}$, and $\mathcal{G}_{i},\,i\in\underline{M}$
appears in the sequence of $\mathcal{G}(t)$ for infinitely times.}

\section{\textcolor{black}{Main Results \label{sec:Consensus-on-General}}}

In the following part, under the condition that the cluster consensus
can be achieved for the multi-agent system \eqref{equ:matrix-consensus-overall},
we shall exploit the connection between the cluster consensus and
the null space of matrix-valued Laplacian matrices associated with
a sequence of matrix-weighted networks.\textcolor{black}{{} We shall
start from the case that the underlying network of multi-agent system
\eqref{equ:matrix-consensus-overall} switches amongst finite number
of networks, as stated in the Assumption 2. Before showing the main
result of this part, we first explore the properties of $\underset{t\rightarrow\infty}{\text{{\bf lim}}}\boldsymbol{x}(t)$,
which plays an important role in the proof of our main result.}
\begin{lem}
\textcolor{black}{Consider the multi-agent system \eqref{equ:matrix-consensus-overall}
on a matrix-weighted switching network $\mathcal{G}(t)$ satisfying
}\textbf{\textcolor{black}{Assumption }}\textcolor{black}{2. If the
multi-agent system \eqref{equ:matrix-consensus-overall} achieves
the cluster consensus, namely, there exists $\boldsymbol{x}^{*}\in\mathbb{R}^{dn}$
such that $\underset{t\rightarrow\infty}{\text{{\bf lim}}}\boldsymbol{x}(t)=\boldsymbol{x}^{*}$,
then $\underset{t\rightarrow\infty}{\text{{\bf lim}}}L(t)\boldsymbol{x}^{*}=\boldsymbol{0}$.}
\end{lem}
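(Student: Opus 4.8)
The plan is to exploit that, under Assumption 2, the Laplacian $L(t)$ takes values in the finite set $\{L_1,\dots,L_M\}$ of matrix-valued Laplacians associated with $\{\mathcal{G}_1,\dots,\mathcal{G}_M\}$, and to extract information about each $L_i\boldsymbol{x}^*$ by integrating the dynamics over the dwell intervals. Writing $\sigma(k)\in\underline{M}$ for the index of the network active on $[t_k,t_{k+1})$, integrating \eqref{equ:matrix-consensus-overall} over this interval gives
\[
\boldsymbol{x}(t_{k+1})-\boldsymbol{x}(t_k)=-L_{\sigma(k)}\int_{t_k}^{t_{k+1}}\boldsymbol{x}(t)\,\mathrm{d}t.
\]
Because $\boldsymbol{x}(t)\to\boldsymbol{x}^*$, the left-hand side tends to $\boldsymbol{0}$ as $k\to\infty$, so the task reduces to showing that the right-hand side forces $L_{\sigma(k)}\boldsymbol{x}^*\to\boldsymbol{0}$.

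To this end I would decompose $\boldsymbol{x}(t)=\boldsymbol{x}^*+\boldsymbol{e}(t)$ with $\boldsymbol{e}(t)\to\boldsymbol{0}$, so that $\int_{t_k}^{t_{k+1}}\boldsymbol{x}(t)\,\mathrm{d}t=\Delta t_k\,\boldsymbol{x}^*+\int_{t_k}^{t_{k+1}}\boldsymbol{e}(t)\,\mathrm{d}t$. With $\epsilon_k=\sup_{t\ge t_k}\|\boldsymbol{e}(t)\|$ (which vanishes as $k\to\infty$) and $C=\max_{i\in\underline{M}}\|L_i\|$, the residual obeys $\big\|L_{\sigma(k)}\int_{t_k}^{t_{k+1}}\boldsymbol{e}(t)\,\mathrm{d}t\big\|\le C\,\Delta t_k\,\epsilon_k$. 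Substituting, dividing by $\Delta t_k$, and using the dwell-time bound $\Delta t_k\ge\alpha>0$ from Assumption 1 yields
\[
\|L_{\sigma(k)}\boldsymbol{x}^*\|\le\frac{\|\boldsymbol{x}(t_{k+1})-\boldsymbol{x}(t_k)\|}{\alpha}+C\,\epsilon_k,
\]
whose right-hand side tends to $0$; hence $\|L_{\sigma(k)}\boldsymbol{x}^*\|\to0$.

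Finally, I would invoke the assumption that each $\mathcal{G}_i$ recurs infinitely often: for every fixed $i\in\underline{M}$ there is a subsequence along which $\sigma(k)=i$, and on it the constant value $\|L_i\boldsymbol{x}^*\|$ equals the limit $0$, so $L_i\boldsymbol{x}^*=\boldsymbol{0}$. Since $L(t)\in\{L_1,\dots,L_M\}$ for all $t$, this gives $L(t)\boldsymbol{x}^*=\boldsymbol{0}$ for every $t$, and in particular $\lim_{t\to\infty}L(t)\boldsymbol{x}^*=\boldsymbol{0}$.

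The step I expect to be the main obstacle is controlling the residual from the integrated dynamics: the crude bound on $\int_{t_k}^{t_{k+1}}\boldsymbol{e}(t)\,\mathrm{d}t$ scales with the possibly unbounded interval length $\Delta t_k$, so one must first separate the term $\Delta t_k\,\boldsymbol{x}^*$ and divide through before letting the error vanish. It is precisely the dwell-time lower bound $\alpha$ that makes this division legitimate and confines the error to order $\epsilon_k$; without it the argument collapses. A secondary subtlety is that convergence of $\boldsymbol{x}(t)$ does not entail convergence of $\dot{\boldsymbol{x}}(t)$, which jumps at the switching instants --- this is exactly why I work with the integral form instead of attempting to pass to the limit directly in $\dot{\boldsymbol{x}}(t)=-L(t)\boldsymbol{x}(t)$.
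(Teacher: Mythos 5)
Your proof is correct, but it takes a genuinely different route from the paper's. The paper first shows $\|\ddot{\boldsymbol{x}}(t)\|\leq\beta^{2}\|\boldsymbol{x}(0)\|$ with $\beta=\max_{i\in\underline{M}}\|L_{i}\|$, invokes the Barbalat-type convergence lemma in the Appendix (Lemma \ref{lem:convergence lemma}) to conclude $\dot{\boldsymbol{x}}(t)\rightarrow\boldsymbol{0}$, and then uses the identity $L(t)\boldsymbol{x}^{*}=\bigl(\dot{\boldsymbol{x}}(t)+L(t)\boldsymbol{x}^{*}\bigr)-\dot{\boldsymbol{x}}(t)$ together with $\|\dot{\boldsymbol{x}}(t)+L(t)\boldsymbol{x}^{*}\|=\|-L(t)(\boldsymbol{x}(t)-\boldsymbol{x}^{*})\|\leq\beta\|\boldsymbol{x}(t)-\boldsymbol{x}^{*}\|\rightarrow0$. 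You instead integrate the dynamics over each dwell interval, split off the term $\Delta t_{k}\,\boldsymbol{x}^{*}$, and divide by $\Delta t_{k}\geq\alpha$; this avoids the auxiliary lemma and the second-derivative bound entirely, and it neatly sidesteps the fact that $\dot{\boldsymbol{x}}$ is only piecewise defined across switching instants (a point the paper glosses over when applying the convergence lemma componentwise). Your route also delivers more than the statement asks for: combining the vanishing of $\|L_{\sigma(k)}\boldsymbol{x}^{*}\|$ with the recurrence of each $\mathcal{G}_{i}$ gives $L_{i}\boldsymbol{x}^{*}=\boldsymbol{0}$ for every $i\in\underline{M}$, hence $L(t)\boldsymbol{x}^{*}\equiv\boldsymbol{0}$, which is exactly the strengthening the paper only extracts afterwards (in the proof of its next lemma) by re-invoking Assumption 2. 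The price is that your argument leans directly on the dwell-time lower bound $\alpha$ for the division step, whereas in the paper that bound is consumed inside the cited convergence lemma; both proofs ultimately need it.
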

\begin{proof}
Denote by $\varPhi(t,0)$ as the state transition matrix of multi-agent
system \eqref{equ:matrix-consensus-overall} over time interval $[0,t]$,
then one has $\parallel\varPhi(t,0)\parallel\leq1$, thus for any
$\boldsymbol{x}(0)\in\mathbb{R}^{dn}$ and $t>0$, \textcolor{black}{
\[
\parallel\boldsymbol{x}(t)\parallel\leq\parallel\boldsymbol{x}(0)\parallel.
\]
In addition, based on $\dot{\boldsymbol{x}}(t)=-L(t)\boldsymbol{x}(t)$,
one can derive $\ddot{\boldsymbol{x}}(t)=\left(L(t)\right)^{2}\boldsymbol{x}(t)$,
therefore, 
\[
\parallel\ddot{\boldsymbol{x}}_{i}(t)\parallel\leq\parallel\ddot{\boldsymbol{x}}(t)\parallel\leq\beta^{2}\parallel\boldsymbol{x}(t)\parallel\leq\beta^{2}\parallel\boldsymbol{x}(0)\parallel,
\]
where $\beta=\underset{\{i\in\underline{M}\}}{\text{{\bf max}}}\parallel L_{i}\parallel$
and $L_{i}$ is the matrix-valued Laplacian matrix corresponding to
$\mathcal{G}_{i}$, where $i\in\underline{M}$. Therefore, according
to }\textbf{\textcolor{black}{Lemma }}\textcolor{black}{\ref{lem:convergence lemma}}\textbf{\textcolor{black}{{}
}}\textcolor{black}{in the Appendix}\textbf{\textcolor{black}{, }}\textcolor{black}{one
has}\textbf{\textcolor{black}{,}}\textcolor{black}{
\[
\underset{t\rightarrow\infty}{\text{{\bf lim}}}\dot{\boldsymbol{x}}_{i}(t)=\boldsymbol{0},
\]
which imply that $\underset{t\rightarrow\infty}{\text{{\bf lim}}}\dot{\boldsymbol{x}}(t)=\boldsymbol{0}$.
Due to 
\[
L(t)\boldsymbol{x}^{*}=\left(\dot{\boldsymbol{x}}(t)+L(t)\boldsymbol{x}^{*}\right)-\dot{\boldsymbol{x}}(t),
\]
 and 
\begin{align*}
\parallel\dot{\boldsymbol{x}}(t)+L(t)\boldsymbol{x}^{*}\parallel & =\parallel-L(t)\left(\boldsymbol{x}(t)-\boldsymbol{x}^{*}\right)\parallel\\
 & \leq\beta\parallel\boldsymbol{x}(t)-\boldsymbol{x}^{*}\parallel,
\end{align*}
 thus, $\underset{t\rightarrow\infty}{\text{{\bf lim}}}L(t)\boldsymbol{x}^{*}=\boldsymbol{0}$.}
\end{proof}
\textcolor{black}{}
\textcolor{black}{Denote the state transition matrix of multi-agent
system \eqref{equ:matrix-consensus-overall} over time interval $[t_{0},t]$
as $\Phi(t,t_{0})$, then $\boldsymbol{x}(t)=\Phi(t,t_{0})\boldsymbol{x}(t_{0})$,
the following lemma presents the properties of $\Phi(t,t_{0})$ which
decides the convergence value of $\boldsymbol{x}(t)$.}
\begin{lem}
\textcolor{black}{Consider the multi-agent system \eqref{equ:matrix-consensus-overall}
on a matrix-weighted switching network $\mathcal{G}(t)$ satisfying
}\textbf{\textcolor{black}{Assumption }}\textcolor{black}{2. Then,
$\underset{t\rightarrow\infty}{\text{{\bf lim}}}\boldsymbol{x}(t)$
exists for any $\boldsymbol{x}(t_{0})\in\mathbb{R}^{dn}$ if and only
if $\underset{t\rightarrow\infty}{\text{{\bf lim}}}\Phi(t,t_{0})$
exists. Moreover, denote by $\underset{t\rightarrow\infty}{\text{{\bf lim}}}\Phi(t,t_{0})=\Phi^{*}(t_{0})$,
then $\left[\Phi^{*}(t_{0})\right]^{i}=\Phi^{*}(t_{0})$ for any $i\in\mathbb{Z}_{+}$.}
\end{lem}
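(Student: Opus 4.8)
The plan is to treat the equivalence and the idempotency separately, the latter resting on the preceding lemma. For the equivalence, the \emph{if} direction is immediate: if $\underset{t\rightarrow\infty}{\text{\textbf{lim}}}\,\Phi(t,t_{0})=\Phi^{*}(t_{0})$ exists, then $\underset{t\rightarrow\infty}{\text{\textbf{lim}}}\,\boldsymbol{x}(t)=\underset{t\rightarrow\infty}{\text{\textbf{lim}}}\,\Phi(t,t_{0})\boldsymbol{x}(t_{0})=\Phi^{*}(t_{0})\boldsymbol{x}(t_{0})$ exists for every $\boldsymbol{x}(t_{0})$. For the \emph{only if} direction I would read off the columns of the transition matrix by feeding in standard basis vectors: taking $\boldsymbol{x}(t_{0})=\boldsymbol{e}_{j}$ shows that the $j$-th column $\Phi(t,t_{0})\boldsymbol{e}_{j}$ converges for each $j\in\underline{dn}$, and since every column converges, $\Phi(t,t_{0})$ converges entrywise, so $\underset{t\rightarrow\infty}{\text{\textbf{lim}}}\,\Phi(t,t_{0})$ exists. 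This half is where the universal quantifier ``for any $\boldsymbol{x}(t_{0})$'' is essential.

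Now assume $\Phi^{*}(t_{0})$ exists and set $S:=\bigcap_{i\in\underline{M}}\text{{\bf null}}(L_{i})$. The core of the argument is to show that $\Phi^{*}(t_{0})$ is the identity on $S$ while mapping everything into $S$. First, for the range: fix any $\boldsymbol{v}\in\mathbb{R}^{dn}$, take $\boldsymbol{x}(t_{0})=\boldsymbol{v}$, so that $\boldsymbol{x}^{*}:=\Phi^{*}(t_{0})\boldsymbol{v}=\underset{t\rightarrow\infty}{\text{\textbf{lim}}}\,\boldsymbol{x}(t)$ exists; the preceding lemma then gives $\underset{t\rightarrow\infty}{\text{\textbf{lim}}}\,L(t)\boldsymbol{x}^{*}=\boldsymbol{0}$. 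Because $L(t)$ is piecewise constant with values in the finite set $\{L_{1},\ldots,L_{M}\}$, and by \textbf{Assumption 2} together with the dwell-time bound $\alpha$ each $L_{i}$ is attained along a sequence of times $s_{k}\rightarrow\infty$, evaluating the limit along that sequence yields $L_{i}\boldsymbol{x}^{*}=\underset{k\rightarrow\infty}{\text{\textbf{lim}}}\,L(s_{k})\boldsymbol{x}^{*}=\boldsymbol{0}$ for every $i\in\underline{M}$; hence $\boldsymbol{x}^{*}\in S$, i.e. the range of $\Phi^{*}(t_{0})$ lies in $S$. Second, for the action on $S$: if $\boldsymbol{z}\in S$, the constant curve $\boldsymbol{x}(t)\equiv\boldsymbol{z}$ satisfies $\dot{\boldsymbol{x}}(t)=\boldsymbol{0}=-L(t)\boldsymbol{z}$, so by uniqueness of solutions $\Phi(t,t_{0})\boldsymbol{z}=\boldsymbol{z}$ for all $t$, whence $\Phi^{*}(t_{0})\boldsymbol{z}=\boldsymbol{z}$.

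Combining the two facts finishes the proof: for arbitrary $\boldsymbol{v}$ we have $\Phi^{*}(t_{0})\boldsymbol{v}\in S$, and $\Phi^{*}(t_{0})$ fixes $S$ pointwise, so $\left[\Phi^{*}(t_{0})\right]^{2}\boldsymbol{v}=\Phi^{*}(t_{0})\left(\Phi^{*}(t_{0})\boldsymbol{v}\right)=\Phi^{*}(t_{0})\boldsymbol{v}$; thus $\left[\Phi^{*}(t_{0})\right]^{2}=\Phi^{*}(t_{0})$, and a one-line induction extends this to $\left[\Phi^{*}(t_{0})\right]^{i}=\Phi^{*}(t_{0})$ for all $i\in\mathbb{Z}_{+}$. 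The step I expect to be the main obstacle is the upgrade from the asymptotic condition $\underset{t\rightarrow\infty}{\text{\textbf{lim}}}\,L(t)\boldsymbol{x}^{*}=\boldsymbol{0}$ to the pointwise conditions $L_{i}\boldsymbol{x}^{*}=\boldsymbol{0}$: this is precisely where \textbf{Assumption 2} (every mode recurring infinitely often under a positive dwell time) is indispensable, since it is what forces $\boldsymbol{x}^{*}$ into the common null space $S$ and thereby makes it a simultaneous equilibrium of all modes; everything else is routine bookkeeping with the flow.
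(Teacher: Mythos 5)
Your proposal is correct and follows essentially the same route as the paper: the equivalence is proved identically (columns via standard basis vectors), and the idempotency rests on the same two ingredients, namely Lemma 1 giving $\underset{t\rightarrow\infty}{\text{{\bf lim}}}L(t)\boldsymbol{x}^{*}=\boldsymbol{0}$ and Assumption 2 (each mode recurring infinitely often) upgrading this to $L_{i}\boldsymbol{x}^{*}=\boldsymbol{0}$ for all $i$, i.e.\ $L(t)\Phi^{*}(t_{0})=0$. The only cosmetic difference is the final step: where the paper invokes the Peano--Baker series to conclude $\Phi(t,t_{0})\Phi^{*}(t_{0})=\Phi^{*}(t_{0})$ and then passes to the limit, you observe via uniqueness of solutions that the flow fixes $\bigcap_{i\in\underline{M}}\text{{\bf null}}(L_{i})$ pointwise and that the range of $\Phi^{*}(t_{0})$ lies in this subspace — an equally valid and arguably more transparent closing argument.
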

\begin{proof}
\textcolor{black}{(Sufficiency) Due to $\boldsymbol{x}(t)=\Phi(t,t_{0})\boldsymbol{x}(t_{0})$
and $\underset{t\rightarrow\infty}{\text{{\bf lim}}}\Phi(t,t_{0})$
exists, thus, 
\[
\underset{t\rightarrow\infty}{\text{{\bf lim}}}\boldsymbol{x}(t)=\underset{t\rightarrow\infty}{\text{{\bf lim}}}\Phi(t,t_{0})\boldsymbol{x}(t_{0})=\Phi^{*}(t_{0})\boldsymbol{x}(t_{0}),
\]
i.e., $\underset{t\rightarrow\infty}{\text{{\bf lim}}}\boldsymbol{x}(t)$
exists.}

\textcolor{black}{(Necessity) If $\underset{t\rightarrow\infty}{\text{{\bf lim}}}\boldsymbol{x}(t)$
exists for any $\boldsymbol{x}(t_{0})\in\mathbb{R}^{dn}$, without
loss of generality, one can choose $\boldsymbol{x}(t_{0})=\boldsymbol{e}_{i},\,i\in\underline{dn}$,
where $\boldsymbol{e}_{i}\in\mathbb{R}^{dn}$ has its $i$-th component
equal to one with others equal to zero. Then, one has,
\begin{align*}
 & \underset{t\rightarrow\infty}{\text{{\bf lim}}}\Phi(t,t_{0})\\
= & \left[\underset{t\rightarrow\infty}{\text{{\bf lim}}}\Phi(t,t_{0})\boldsymbol{e}_{1},\ldots,\underset{t\rightarrow\infty}{\text{{\bf lim}}}\Phi(t,t_{0})\boldsymbol{e}_{dn}\right],
\end{align*}
due to $\underset{t\rightarrow\infty}{\text{{\bf lim}}}\boldsymbol{x}(t)$
exists for any $\boldsymbol{x}(t_{0})\in\mathbb{R}^{dn}$ and $\underset{t\rightarrow\infty}{\text{{\bf lim}}}\boldsymbol{x}(t)=\underset{t\rightarrow\infty}{\text{{\bf lim}}}\Phi(t,t_{0})\boldsymbol{x}(t_{0})$,
one can conclude that $\underset{t\rightarrow\infty}{\text{{\bf lim}}}\Phi(t,t_{0})$
exists.}

Denote by $\underset{t\rightarrow\infty}{\text{{\bf lim}}}\boldsymbol{x}(t)=\boldsymbol{x}^{*}$,
due to the fact $\underset{t\rightarrow\infty}{\text{{\bf lim}}}L(t)\boldsymbol{x}^{*}=\boldsymbol{0}$
and $\mathcal{G}_{i},\,i\in\underline{M}$ appears in the sequence
of $\mathcal{G}(t)$ for infinitely times, one has $L(t)\boldsymbol{x}^{*}=\boldsymbol{0}$,
therefore, for any $\boldsymbol{x}(t_{0})\in\mathbb{R}^{dn}$, 
\[
L(t)\boldsymbol{x}^{*}=L(t)\Phi^{*}(t_{0})\boldsymbol{x}(t_{0})=\boldsymbol{0},
\]
then, 
\begin{align*}
 & L(t)\Phi^{*}(t_{0})\\
= & \left[L(t)\Phi^{*}(t_{0})\boldsymbol{e}_{1},\ldots,L(t)\Phi^{*}(t_{0})\boldsymbol{e}_{dn}\right]\\
= & 0_{dn\times dn}.
\end{align*}

\textcolor{black}{According to the Peano-Baker series form of $\Phi(t,t_{0})$,
\begin{align*}
 & \Phi(t,t_{0})\\
= & I_{n}+\sum_{k=1}^{\infty}\int_{t_{0}}^{t}\left[-L(\sigma_{1})\right]\int_{t_{0}}^{\sigma_{1}}\left[-L(\sigma_{2})\right]\cdots\\
 & \int_{t_{0}}^{\sigma_{k-1}}\left[-L(\sigma_{k})\right]d\sigma_{k}\cdots d\sigma_{2}d\sigma_{1},
\end{align*}
thus, $\Phi(t,t_{0})\Phi^{*}(t_{0})=\Phi^{*}(t_{0})$. Take the limit
on both sides leads to $\left[\Phi^{*}(t_{0})\right]^{2}=\Phi^{*}(t_{0})$,
therefore,}\textcolor{orange}{{} ${\color{black}\left[\Phi^{*}(t_{0})\right]^{i}=\Phi^{*}(t_{0})}$
}\textcolor{black}{for any}\textcolor{orange}{{} }\textcolor{black}{$i\in\mathbb{Z}_{+}$.}
\end{proof}
\textcolor{black}{}
\textcolor{black}{Based on the above established Lemmas, we shall
show the relationship between the c}luster consensus and the matrix-valued
Laplacian matrix $L(t)$ of the associated matrix-weighted networks,
and further provide the explicit expression of the cluster consensus
value.
\begin{thm}
\textcolor{black}{\label{lem:cluster-value}Consider the multi-agent
system \eqref{equ:matrix-consensus-overall} on a matrix-weighted
switching network $\mathcal{G}(t)$ satisfying }\textbf{\textcolor{black}{Assumption
}}\textcolor{black}{2. If $\underset{t\rightarrow\infty}{\text{{\bf lim}}}\boldsymbol{x}(t)=\boldsymbol{x}^{*}$,
then 
\[
\boldsymbol{x}^{*}\in\underset{i\in\underline{M}}{\bigcap}\text{{\bf null}}(L(\mathcal{G}_{i})).
\]
Moreover,
\[
\boldsymbol{x}^{*}=\sum_{i=1}^{r}(\boldsymbol{\eta}_{i}^{\top}\boldsymbol{x}(t_{0}))\boldsymbol{\eta}_{i},
\]
where $\boldsymbol{\eta}_{i}\in\mathbb{R}^{dn}$ satisfies $\text{{\bf span}}\left\{ \boldsymbol{\eta}_{1},\ldots,\boldsymbol{\eta}_{r}\right\} =\underset{i\in\underline{M}}{\bigcap}\text{{\bf null}}(L(\mathcal{G}_{i}))$
and 
\[
\boldsymbol{\eta}_{i}^{\top}\boldsymbol{\eta}_{j}=\begin{cases}
1, & i=j\\
0 & i\neq j
\end{cases},\,\forall i,j\in\underline{r}.
\]
}
\end{thm}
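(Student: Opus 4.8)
The plan is to split the statement into its two assertions and treat them in turn, with the second resting on a conservation law that the matrix-weighted dynamics enjoy.

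First I would establish the membership $\boldsymbol{x}^{*}\in\bigcap_{i\in\underline{M}}\mathbf{null}(L(\mathcal{G}_{i}))=:\mathcal{S}$. The preceding lemma already supplies $\underset{t\rightarrow\infty}{\lim}L(t)\boldsymbol{x}^{*}=\boldsymbol{0}$. Under Assumption 2 the Laplacian $L(t)$ ranges over the finite set $\{L(\mathcal{G}_{1}),\ldots,L(\mathcal{G}_{M})\}$ and each $\mathcal{G}_{i}$ recurs infinitely often, so for every fixed $i$ there is a sequence $t_{k}\rightarrow\infty$ along which $L(t_{k})=L(\mathcal{G}_{i})$; passing to this subsequence in the limit forces $L(\mathcal{G}_{i})\boldsymbol{x}^{*}=\boldsymbol{0}$. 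As $i\in\underline{M}$ is arbitrary, the first claim follows. I would then fix the orthonormal basis $\{\boldsymbol{\eta}_{1},\ldots,\boldsymbol{\eta}_{r}\}$ of $\mathcal{S}$ and write $P=\sum_{i=1}^{r}\boldsymbol{\eta}_{i}\boldsymbol{\eta}_{i}^{\top}$ for the orthogonal projector onto $\mathcal{S}$.

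The crux is a conserved quantity. Fix any $\boldsymbol{v}\in\mathcal{S}$. Since $L(t)\in\{L(\mathcal{G}_{i})\}_{i\in\underline{M}}$ for every $t$, we have $L(t)\boldsymbol{v}=\boldsymbol{0}$; and because each matrix-valued Laplacian is symmetric (its blocks obey $A_{ij}(t)=A_{ji}(t)=A_{ij}(t)^{\top}$ and the degree part is block-diagonal and symmetric), this upgrades to $\boldsymbol{v}^{\top}L(t)=\boldsymbol{0}^{\top}$. Hence on each dwell interval $[t_{k},t_{k+1})$,
\[
\frac{d}{dt}\bigl(\boldsymbol{v}^{\top}\boldsymbol{x}(t)\bigr)=-\boldsymbol{v}^{\top}L(t)\boldsymbol{x}(t)=0.
\]
Because $\boldsymbol{x}(t)$ is continuous at the switching instants, $\boldsymbol{v}^{\top}\boldsymbol{x}(t)$ is continuous with vanishing derivative on every interval, hence globally constant: $\boldsymbol{v}^{\top}\boldsymbol{x}(t)=\boldsymbol{v}^{\top}\boldsymbol{x}(t_{0})$ for all $t\geq t_{0}$. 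Taking $\boldsymbol{v}=\boldsymbol{\eta}_{i}$ for each $i$ then gives $P\boldsymbol{x}(t)=P\boldsymbol{x}(t_{0})$ for all $t$.

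To finish, I would let $t\rightarrow\infty$: continuity of the linear map $P$ together with $\boldsymbol{x}(t)\rightarrow\boldsymbol{x}^{*}$ gives $P\boldsymbol{x}^{*}=P\boldsymbol{x}(t_{0})$, while $\boldsymbol{x}^{*}\in\mathcal{S}$ gives $P\boldsymbol{x}^{*}=\boldsymbol{x}^{*}$, so that
\[
\boldsymbol{x}^{*}=P\boldsymbol{x}(t_{0})=\sum_{i=1}^{r}(\boldsymbol{\eta}_{i}^{\top}\boldsymbol{x}(t_{0}))\boldsymbol{\eta}_{i},
\]
which is the asserted value; incidentally this identifies $\Phi^{*}(t_{0})=P$, consistent with the idempotency $[\Phi^{*}(t_{0})]^{i}=\Phi^{*}(t_{0})$ obtained earlier. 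The step I expect to be most delicate is the conservation law: it relies essentially on the symmetry of $L(t)$---so that $\boldsymbol{v}\in\mathbf{null}(L(t))$ yields $\boldsymbol{v}^{\top}L(t)=\boldsymbol{0}^{\top}$---and on the dwell-time structure of Assumption 1, which is what lets me glue the piecewise-defined dynamics into a single continuous scalar $\boldsymbol{v}^{\top}\boldsymbol{x}(t)$ that is constant despite the nondifferentiability of $\boldsymbol{x}(t)$ at the switching times.
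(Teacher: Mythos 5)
Your proposal is correct, and it reaches the conclusion by a genuinely different route than the paper. The paper works through the limiting state transition matrix: it invokes the preceding lemma to get $L(t)\Phi^{*}(t_{0})=0$ and $\left[\Phi^{*}(t_{0})\right]^{2}=\Phi^{*}(t_{0})$, identifies the eigenspace of $\Phi^{*}(t_{0})$ for eigenvalue $1$ with $\bigcap_{i}\text{{\bf null}}(L(\mathcal{G}_{i}))$, and then writes $\Phi^{*}(t_{0})=P\,\mathrm{diag}(I_{r},0)\,P^{-1}=\sum_{i}\boldsymbol{\eta}_{i}\boldsymbol{\eta}_{i}^{\top}$ via a similarity transform whose first $r$ columns (and first $r$ rows of the inverse) are the $\boldsymbol{\eta}_{i}$. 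You bypass the idempotency and diagonalization entirely: your conservation law $\frac{d}{dt}(\boldsymbol{v}^{\top}\boldsymbol{x}(t))=-\boldsymbol{v}^{\top}L(t)\boldsymbol{x}(t)=0$ for $\boldsymbol{v}$ in the common null space, glued across switching instants by continuity of $\boldsymbol{x}(t)$, gives $\boldsymbol{\eta}_{i}^{\top}\boldsymbol{x}(t)\equiv\boldsymbol{\eta}_{i}^{\top}\boldsymbol{x}(t_{0})$ directly, and combined with $\boldsymbol{x}^{*}\in\mathcal{S}$ this yields the projection formula without ever needing $\lim_{t\rightarrow\infty}\Phi(t,t_{0})$ to exist as a matrix. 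The two arguments rest on the same underlying facts---symmetry of $L(t)$ (so right null vectors are left null vectors) and the infinite recurrence of each $\mathcal{G}_{i}$ for the membership claim, which you prove exactly as the paper does inside its Lemma 2---but your version is more elementary, and it makes explicit the step the paper passes over quickly, namely why $\boldsymbol{\eta}_{i}^{\top}\Phi^{*}(t_{0})=\boldsymbol{\eta}_{i}^{\top}$ (in your language, why $\boldsymbol{\eta}_{i}^{\top}\boldsymbol{x}(t)$ is conserved). What the paper's heavier machinery buys is the characterization of $\Phi^{*}(t_{0})$ itself as the orthogonal projector, which it reuses conceptually elsewhere; your argument recovers that identification only as a corollary, which is all the theorem actually needs.
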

\textcolor{black}{}
\begin{proof}
\textcolor{black}{Since $\left[\Phi^{*}(t_{0})\right]^{2}=\Phi^{*}(t_{0})$,
then $\Phi^{*}(t_{0})$ is idempotent and diagonalizable, and the
eigenvalues of $\Phi^{*}(t_{0})$ are $0$ or $1$. We shall first
prove that the eigenvector space corresponding to the eigenvalue $1$
of $\Phi^{*}(t_{0})$ is $\underset{t\geq0}{\bigcap}\text{{\bf null}}(L(t))$.
On the one hand, for any $\boldsymbol{\alpha}\in\text{{\bf span}}\left\{ \boldsymbol{\eta}_{1},\boldsymbol{\eta}_{2},\ldots,\boldsymbol{\eta}_{r}\right\} $,
due to $L(t)\boldsymbol{\alpha}=\boldsymbol{0}$ for any $t\geq t_{0}$,
thus $\Phi(t,t_{0})\boldsymbol{\alpha}=\boldsymbol{\alpha}$. By taking
the limit of $\Phi(t,t_{0})$, it is easy to derive $\Phi^{*}(t_{0})\boldsymbol{\alpha}=\boldsymbol{\alpha}$.
Conversely, for an arbitrary $\boldsymbol{\alpha}$ such that $\Phi^{*}(t_{0})\boldsymbol{\alpha}=\boldsymbol{\alpha}$,
$L(t)\boldsymbol{\alpha}=L(t)\Phi^{*}(t_{0})\boldsymbol{\alpha}=\boldsymbol{0}$.
Therefore, one has $\Phi^{*}(t_{0})\boldsymbol{\eta}_{i}=\boldsymbol{\eta}_{i}$
and $\boldsymbol{\eta}_{i}^{\top}\Phi^{*}(t_{0})=\boldsymbol{\eta}_{i}^{\top}$
for any $i\in\underline{r}$. There exists a matrix $P=\left[\boldsymbol{\eta}_{1},\boldsymbol{\eta}_{2},\ldots,\boldsymbol{\eta}_{r},*,\ldots,*\right]$
}together with its inverse $P^{-1}=\left[\boldsymbol{\eta}_{1},\boldsymbol{\eta}_{2},\ldots,\boldsymbol{\eta}_{r},\star,\ldots,\star\right]^{\top}$\textcolor{black}{such
that 
\begin{align*}
\Phi^{*}(t_{0}) & =P\left[\begin{array}{cc}
I_{r} & 0\\
0 & 0
\end{array}\right]P^{-1}\\
 & =\sum_{i=1}^{r}\boldsymbol{\eta}_{i}\boldsymbol{\eta}_{i}^{\top},
\end{align*}
and one can deduce that
\begin{align*}
\boldsymbol{x}^{*} & =\Phi^{*}(t_{0})\boldsymbol{x}(0)\\
 & =\sum_{i=1}^{r}\boldsymbol{\eta}_{i}\boldsymbol{\eta}_{i}^{\top}\boldsymbol{x}(0)\\
 & =\sum_{i=1}^{r}(\boldsymbol{\eta}_{i}^{\top}\boldsymbol{x}(0))\boldsymbol{\eta}_{i}.
\end{align*}
}
\end{proof}
\begin{rem}
\textcolor{black}{In the }\textbf{\textcolor{black}{Theorem}}\textcolor{black}{{}
\ref{lem:cluster-value}, if $\underset{t\rightarrow\infty}{\text{{\bf lim}}}\boldsymbol{x}(t)=\boldsymbol{x}^{*}$
exists for any initial state $\boldsymbol{x}(t_{0})$, and $\underset{t\geq0}{\bigcap}\text{{\bf null}}(L(t))=\{\boldsymbol{0}\}$,
then the system \eqref{equ:matrix-consensus-overall} achieves the
asymptotic stability. Also, when $\underset{t\geq0}{\bigcap}\text{{\bf null}}(L(t))=\{\boldsymbol{1}_{n}\otimes I_{d}\}$,
the average consensus will be achieved for the system \eqref{equ:matrix-consensus-overall}.}
\end{rem}
\begin{rem}
In the \textbf{Theorem} \ref{lem:cluster-value}, it is assumed that
the switching network $\mathcal{G}(t)$ is constructed from a finite
set of graphs. Here, we shall ask whether or not the conclusion holds
if the switching network $\mathcal{G}(t)$ is constructed from an
infinite set of graphs?\textcolor{red}{{} }\textcolor{black}{To see
this, let us choose, for instance, the multi-agent system $\dot{\boldsymbol{x}}(t)=-\frac{1}{\lfloor t+1\rfloor^{2}}L\boldsymbol{x}(t)$,
where $L$ is the matrix-valued Laplacian matrix of a time-invariant
matrix-weighted network. Now, consider the underlying matrix-weighted
switching network corresponding to the Laplacian matrix $\frac{1}{\lfloor t+1\rfloor^{2}}L$.
One can see that $\text{{\bf lim}}_{t\rightarrow\infty}\boldsymbol{x}(t)=e^{-\frac{\pi^{2}}{6}L}\boldsymbol{x}(0)$.
Here, the convergence value is not only related to the null space
of $L(t)$, but also to the other eigenvectors corresponding t}o the
non-zero eigenvalues. However, if we choose the multi-agent system
$\dot{\boldsymbol{x}}(t)=-\lfloor t+1\rfloor L\boldsymbol{x}(t)$
and $L$ is the same as the above example. Let $\text{{\bf null}}(L)=\text{{\bf span}}\left\{ \boldsymbol{\xi}_{1},\ldots,\boldsymbol{\xi}_{m}\right\} $,
where $\boldsymbol{\xi}_{i}\in\mathbb{R}^{dn}$ satisfies 
\[
\boldsymbol{\xi}_{i}^{\top}\boldsymbol{\xi}_{j}=\begin{cases}
1, & i=j\\
0 & i\neq j
\end{cases},\,\forall i,j\in\underline{m}{\color{magenta}.}
\]
Then, $\text{{\bf lim}}_{t\rightarrow\infty}\boldsymbol{x}(t)=\sum_{i=1}^{m}(\boldsymbol{\xi}_{i}^{\top}\boldsymbol{x}(0))\boldsymbol{\xi}_{i}$.
Therefore, the conclusion in \textbf{Theorem} \ref{lem:cluster-value}
does not always hold if the switching network $\mathcal{G}(t)$ is
constructed from an infinite set of graphs.
\end{rem}
Then, we shall proceed to examine quantitative characterization of
cluster consensus\textcolor{black}{{} achieved on matrix-weighted switching
}networks. In particular, we are intended to establish quantitative
connection between the cluster consensus value and specific properties
of $L(t)$. To this end, we introduce the notion of matrix-weighted
integral network; this notion proves crucial in our subsequent analysis.
In the following discussions, we also assume that the weight matrix
associated with $(i,j)\in\mathcal{E}(t)$ satisfies that either $\text{{\bf sgn}}(A_{ij}(t))\geq0$
or $\text{{\bf sgn}}(A_{ij}(t))\leq0$ for $t\ge0$.
\begin{defn}
\label{def:integral graph} \citet{Pan2021Tac} Let $\mathcal{G}(t)=(\mathcal{V},\mathcal{E}(t),A(t))$
be a matrix-weighted switching network. The matrix-weighted integral
network of $\mathcal{G}(t)$ over time span $[t_{1},t_{2})\subseteq[0,\infty)$
is defined as $\widetilde{\mathcal{G}}_{[t_{1},t_{2})}=(\mathcal{V},\widetilde{\mathcal{E}},\widetilde{A})$,
where 

\[
\widetilde{\mathcal{E}}=\left\{ (i,j)\in\mathcal{V}\times\mathcal{V}\mid\intop_{t_{1}}^{t_{2}}|A_{ij}(t)|dt\succ0\,\text{or}\,\intop_{t_{1}}^{t_{2}}|A_{ij}(t)|dt\succeq0\right\} ,
\]
and
\[
\widetilde{A}=\frac{1}{t_{2}-t_{1}}\intop_{t_{1}}^{t_{2}}A(t)dt.
\]
\end{defn}
According to\textbf{ Definition} \ref{def:integral graph}, let $\widetilde{D}$
denote the matrix-valued degree matrix of $\widetilde{\mathcal{G}}_{[t_{1},t_{2})}$,
that is, 
\[
\widetilde{D}=\frac{1}{t_{2}-t_{1}}\intop_{t_{1}}^{t_{2}}D(t)dt.
\]
Furthermore, let $\widetilde{L}_{[t_{1},t_{2})}$ denote the matrix-valued
Laplacian of $\widetilde{\mathcal{G}}_{[t_{1},t_{2})}$. Thus,

\[
\widetilde{L}_{[t_{1},t_{2})}=\widetilde{D}-\widetilde{A}=\frac{1}{t_{2}-t_{1}}\intop_{t_{1}}^{t_{2}}L(t)dt.
\]

\textcolor{black}{Under the definition of the integral network of
matrix-weighted switching networks, we shall explore connections between
the null space of the matrix-valued Laplacian matrices of a sequence
of matrix-weighted networks and that of the corresponding integral
network, which proves crucial in our subsequent analysis. With reference
to }\textbf{\textcolor{black}{Assumption }}\textcolor{black}{1, we
denote $\mathcal{G}(t)$ on dwell time $t\in[t_{k},t_{k+1})$ as $\mathcal{G}_{[t_{k},t_{k+1})}(t)=\mathcal{G}^{k}$
and denote the associated matrix-valued Laplacian as $L^{k}$, where
$k\in\mathbb{N}$. }
\begin{lem}
\textcolor{black}{\label{thm:null space equality} Let $\mathcal{G}(t)$
be a matrix-weighted switching network satisfying }\textbf{\textcolor{black}{Assumption}}\textcolor{black}{{}
1.}\textbf{\textcolor{black}{{} }}\textcolor{black}{Then
\[
\text{{\bf null}}(\widetilde{L}_{[t_{k^{\prime}},t_{k^{\prime\prime}})})=\underset{i\in\underline{k^{\prime\prime}-k^{\prime}}}{\bigcap}\text{{\bf null}}(L^{k^{\prime}+i-1}),
\]
where $k^{\prime}<k^{\prime\prime}\in\mathbb{N}$.}
\end{lem}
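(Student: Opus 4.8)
The plan is to use \textbf{Assumption} 1 to turn the time integral defining $\widetilde{L}_{[t_{k^{\prime}},t_{k^{\prime\prime}})}$ into a finite positive-coefficient combination of the constant dwell-interval Laplacians, and then to reduce the null-space identity to a standard positive-semidefiniteness argument.

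First, since $\mathcal{G}(t)$ is time-invariant on each $[t_{k},t_{k+1})$, one has $L(t)=L^{j}$ for $t\in[t_{j},t_{j+1})$. Writing $\triangle t_{j}=t_{j+1}-t_{j}$ and $T=t_{k^{\prime\prime}}-t_{k^{\prime}}=\sum_{j=k^{\prime}}^{k^{\prime\prime}-1}\triangle t_{j}$, the integral splits across dwell intervals to give
\[
\widetilde{L}_{[t_{k^{\prime}},t_{k^{\prime\prime}})}=\frac{1}{T}\intop_{t_{k^{\prime}}}^{t_{k^{\prime\prime}}}L(t)\,dt=\sum_{j=k^{\prime}}^{k^{\prime\prime}-1}\frac{\triangle t_{j}}{T}L^{j},
\]
where every coefficient $\triangle t_{j}/T$ is strictly positive because $\triangle t_{j}\geq\alpha>0$ by \textbf{Assumption} 1.

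Next I would show each $L^{j}\succeq0$. Expanding the quadratic form edge by edge and using the standing assumption that every weight is sign-definite (so $A_{ij}=\text{{\bf sgn}}(A_{ij})\mid A_{ij}\mid$ with $\mid A_{ij}\mid\succeq0$), the contribution of edge $(i,j)$ to $\boldsymbol{z}^{\top}L^{j}\boldsymbol{z}$ is
\[
(\boldsymbol{z}_{i}-\text{{\bf sgn}}(A_{ij})\boldsymbol{z}_{j})^{\top}\mid A_{ij}\mid(\boldsymbol{z}_{i}-\text{{\bf sgn}}(A_{ij})\boldsymbol{z}_{j})\geq0,
\]
which is nonnegative since $\mid A_{ij}\mid\succeq0$. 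Summing over edges yields $\boldsymbol{z}^{\top}L^{j}\boldsymbol{z}\geq0$, i.e. $L^{j}\succeq0$; consequently $\widetilde{L}_{[t_{k^{\prime}},t_{k^{\prime\prime}})}\succeq0$ as a positive combination of positive-semidefinite matrices.

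Finally, the identity follows from the fact that the null space of a positive combination of positive-semidefinite matrices is the intersection of their null spaces. The inclusion $\bigcap_{j}\text{{\bf null}}(L^{j})\subseteq\text{{\bf null}}(\widetilde{L}_{[t_{k^{\prime}},t_{k^{\prime\prime}})})$ is immediate. Conversely, if $\widetilde{L}_{[t_{k^{\prime}},t_{k^{\prime\prime}})}\boldsymbol{z}=\boldsymbol{0}$, then $\boldsymbol{z}^{\top}\widetilde{L}_{[t_{k^{\prime}},t_{k^{\prime\prime}})}\boldsymbol{z}=\sum_{j}(\triangle t_{j}/T)\boldsymbol{z}^{\top}L^{j}\boldsymbol{z}=0$; each summand is nonnegative with a strictly positive coefficient, so $\boldsymbol{z}^{\top}L^{j}\boldsymbol{z}=0$ for every $j$, and since $L^{j}\succeq0$ this forces $L^{j}\boldsymbol{z}=\boldsymbol{0}$ (writing the positive-semidefinite $L^{j}=B^{\top}B$, $\boldsymbol{z}^{\top}L^{j}\boldsymbol{z}=\parallel B\boldsymbol{z}\parallel^{2}=0$ gives $B\boldsymbol{z}=\boldsymbol{0}$ and hence $L^{j}\boldsymbol{z}=\boldsymbol{0}$). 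Thus $\boldsymbol{z}\in\bigcap_{j}\text{{\bf null}}(L^{j})$, and reindexing $j=k^{\prime}+i-1$ yields the claim. The only step needing care is the positive-semidefiniteness of $L^{j}$ under signed weights, where negative-definite edges contribute the square of $\boldsymbol{z}_{i}+\boldsymbol{z}_{j}$ rather than $\boldsymbol{z}_{i}-\boldsymbol{z}_{j}$; once this is secured, the null-space identity is routine.
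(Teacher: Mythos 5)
Your proposal is correct and follows essentially the same route as the paper: split the integral over the dwell intervals into a positive-coefficient combination of the constant Laplacians $L^{j}$, and use positive semi-definiteness so that $\boldsymbol{z}^{\top}\widetilde{L}\boldsymbol{z}=0$ forces each $\boldsymbol{z}^{\top}L^{j}\boldsymbol{z}=0$ and hence $L^{j}\boldsymbol{z}=\boldsymbol{0}$. Your edge-by-edge verification that each $L^{j}\succeq0$ under sign-definite weights is a detail the paper merely asserts, so it is a welcome addition rather than a deviation.
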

\begin{proof}
\textcolor{black}{On the one hand, we shall prove that $\text{{\bf null}}(\widetilde{L}_{[t_{k^{\prime}},t_{k^{\prime\prime}})})\subseteq\underset{i\in\underline{k^{\prime\prime}-k^{\prime}}}{\bigcap}\text{{\bf null}}(L^{k^{\prime}+i-1})$,
i.e., for any $\boldsymbol{\eta}\in\text{{\bf null}}(\widetilde{L}_{[t_{k^{\prime}},t_{k^{\prime\prime}})})$,
one has $\boldsymbol{\eta}\in\text{{\bf null}}(L^{k^{\prime}+i-1})$
for all $i\in\underline{k^{\prime\prime}-k^{\prime}}$. Note that
$\boldsymbol{\eta}^{\top}\widetilde{L}_{[t_{k^{\prime}},t_{k^{\prime\prime}})}\boldsymbol{\eta}=0$,
implying that,}

\textcolor{black}{
\begin{align*}
 & \boldsymbol{\eta}^{\top}\widetilde{L}_{[t_{k^{\prime}},t_{k^{\prime\prime}})}\boldsymbol{\eta}\\
 & =\boldsymbol{\eta}^{\top}\left(\frac{1}{t_{k^{\prime\prime}}-t_{k^{\prime}}}\intop_{t_{k^{\prime}}}^{t_{k^{\prime\prime}}}L(t)dt\right)\boldsymbol{\eta}\\
 & =\frac{1}{t_{k^{\prime\prime}}-t_{k^{\prime}}}\sum_{i=1}^{k^{\prime\prime}-k^{\prime}}\boldsymbol{\eta}^{\top}L^{k^{\prime}+i-1}(t_{k^{\prime}+i}-t_{k^{\prime}+i-1})\boldsymbol{\eta}\\
 & =\boldsymbol{0},
\end{align*}
due to the fact that $L^{k^{\prime}+i-1}$ is positive semi-definite
or positive definite for all $i\in\underline{k^{\prime\prime}-k^{\prime}}$,
therefore, $\boldsymbol{\eta}^{\top}L^{k^{\prime}+i-1}\boldsymbol{\eta}=0$
for all $i\in\underline{k^{\prime\prime}-k^{\prime}}$, one has $L^{k^{\prime}+i-1}\boldsymbol{\eta}=\boldsymbol{0}$
and $\boldsymbol{\eta}\in\text{{\bf null}}(L^{k^{\prime}+i-1})$ for
all $i\in\underline{k^{\prime\prime}-k^{\prime}}$, which would imply,
\[
\text{{\bf null}}(\widetilde{L}_{[t_{k^{\prime}},t_{k^{\prime\prime}})})\subseteq\underset{i\in\underline{k^{\prime\prime}-k^{\prime}}}{\bigcap}\text{{\bf null}}(L^{k^{\prime}+i-1}).
\]
}

\textcolor{black}{}

\textcolor{black}{On the other hand, we shall prove that $\underset{i\in\underline{k^{\prime\prime}-k^{\prime}}}{\bigcap}\text{{\bf null}}(L^{k^{\prime}+i-1})\subseteq\text{{\bf null}}(\widetilde{L}_{[t_{k^{\prime}},t_{k^{\prime\prime}})})$,
i.e., for any $\boldsymbol{\eta}\in\underset{i\in\underline{k^{\prime\prime}-k^{\prime}}}{\bigcap}\text{{\bf null}}(L^{k^{\prime}+i-1})$,
one has $\boldsymbol{\eta}\in\text{{\bf null}}(\widetilde{L}_{[t_{k^{\prime}},t_{k^{\prime\prime}})})$.
Considering the quantity $\boldsymbol{\eta}^{\top}\widetilde{L}_{[t_{k^{\prime}},t_{k^{\prime\prime}})}\boldsymbol{\eta}$, }

\textcolor{black}{
\begin{align*}
 & \boldsymbol{\eta}^{\top}\widetilde{L}_{[t_{k^{\prime}},t_{k^{\prime\prime}})}\boldsymbol{\eta}\\
 & =\boldsymbol{\eta}^{\top}\left(\frac{1}{t_{k^{\prime\prime}}-t_{k^{\prime}}}\intop_{t_{k^{\prime}}}^{t_{k^{\prime\prime}}}L(t)dt\right)\boldsymbol{\eta}\\
 & =\frac{1}{t_{k^{\prime\prime}}-t_{k^{\prime}}}\sum_{i=1}^{k^{\prime\prime}-k^{\prime}}\boldsymbol{\eta}^{\top}L^{k^{\prime}+i-1}(t_{k^{\prime}+i}-t_{k^{\prime}+i-1})\boldsymbol{\eta}\\
 & =\boldsymbol{0},
\end{align*}
due to the fact that $\widetilde{L}_{[t_{k^{\prime}},t_{k^{\prime\prime}})}$
is positive semi-definite or positive definite, therefore, $\widetilde{L}_{[t_{k^{\prime}},t_{k^{\prime\prime}})}\boldsymbol{\eta}=\boldsymbol{0}$,
which would imply, 
\[
\underset{i\in\underline{k^{\prime\prime}-k^{\prime}}}{\bigcap}\text{{\bf null}}(L^{k^{\prime}+i-1})\subseteq\text{{\bf null}}(\widetilde{L}_{[t_{k^{\prime}},t_{k^{\prime\prime}})}).
\]
Thus,
\[
\text{{\bf null}}(\widetilde{L}_{[t_{k^{\prime}},t_{k^{\prime\prime}})})=\underset{i\in\underline{k^{\prime\prime}-k^{\prime}}}{\bigcap}\text{{\bf null}}(L^{k^{\prime}+i-1}).
\]
}
\end{proof}
\textcolor{black}{}
\textbf{\textcolor{black}{Lemma}}\textcolor{black}{{} \ref{thm:null space equality}
indicates that the intersection of the null space of matrix-valued
Laplacian matrices associated with a sequence of matrix-weighted networks
is equal to the null space of} the corresponding integral network.
Using this fact, we proceed to explo\textcolor{black}{re the sufficient
conditions under which the multi-agent system \eqref{equ:matrix-consensus-overall}
achieves cluster consensus; these conditions reveal the connection
between the  steady-state of the multi-agent system \eqref{equ:matrix-consensus-overall}
and the null} space of the related integral network.

\textcolor{black}{Denote the state transition matrix of multi-agent
system \eqref{equ:matrix-consensus-overall} over time interval $[t_{k^{\prime}},t_{k^{\prime\prime}}]$
as 
\[
\varPhi(t_{k^{\prime\prime}},t_{k^{\prime}})=e^{-L^{k^{\prime\prime}-1}\triangle t_{k^{\prime\prime}-1}}\cdots e^{-L^{k^{\prime}}\triangle t_{k^{\prime}}},
\]
then $\boldsymbol{x}(t_{k^{\prime\prime}})=\varPhi(t_{k^{\prime\prime}},t_{k^{\prime}})\boldsymbol{x}(t_{k^{\prime}})$,
where $k^{\prime}<k^{\prime\prime}\in\mathbb{N}$. }

\textcolor{black}{Let $\lambda_{1}\leq\lambda_{2}\leq\cdots\leq\lambda_{dn}$
be the eigenvalues of the matrix-valued Laplacian matrix $L$ corresponding
to a time-invariant matrix-weighted network. Let $\text{{\bf dim}}(\text{{\bf null}}(L))=m$,
where $m\in\underline{dn}$, namely, 
\[
0=\lambda_{1}=\cdots=\lambda_{m}\leq\lambda_{m+1}\le\cdots\leq\lambda_{dn}.
\]
Denote by $\beta_{1}\geq\beta_{2}\geq\cdots\geq\beta_{dn}$ as the
eigenvalues of $e^{-Lt}$; then $\beta_{i}(e^{-Lt})=e^{-\lambda_{i}(L)t}$,
i.e., $1=\beta_{1}=\cdots=\beta_{m}\geq\beta_{m+1}\geq\cdots\geq\beta_{dn}$.
In the meantime, the eigenvector corresponding to the eigenvalue $\beta_{i}(e^{-Lt})$
is equal to that corresponding to $\lambda_{i}(L)$. Therefore, let
\[
\text{{\bf dim}}(\underset{i\in\underline{k^{\prime\prime}-k^{\prime}}}{\bigcap}\text{{\bf null}}(L^{k^{\prime}+i-1}))=m,
\]
where $m\in\underline{dn}$, then $\varPhi(t_{k^{\prime\prime}},t_{k^{\prime}})^{\top}\varPhi(t_{k^{\prime\prime}},t_{k^{\prime}})$
has at least $m$ eigenvalues at $1$. Let $\mu_{j}$ be the eigenvalues
of $\varPhi(t_{k^{\prime\prime}},t_{k^{\prime}})^{\top}\varPhi(t_{k^{\prime\prime}},t_{k^{\prime}})$,
where $j\in\underline{dn}$ such that $\mu_{1}=\cdots=\mu_{m}=1$
and $\mu_{m+1}\geq\mu_{m+2}\geq\cdots\geq\mu_{dn}$. Then applying
the facts that $\varPhi(t_{k^{\prime\prime}},t_{k^{\prime}})^{\top}\varPhi(t_{k^{\prime\prime}},t_{k^{\prime}})\ge0$
and 
\begin{align*}
 & \underset{j\in\underline{dn}}{\text{{\bf max}}}\mu_{j}(\varPhi(t_{k^{\prime\prime}},t_{k^{\prime}})^{\top}\varPhi(t_{k^{\prime\prime}},t_{k^{\prime}}))\\
 & =\parallel\varPhi(t_{k^{\prime\prime}},t_{k^{\prime}})^{\top}\varPhi(t_{k^{\prime\prime}},t_{k^{\prime}})\parallel\\
 & \leq1,
\end{align*}
one has $\mu_{dn}\leq\cdots\leq\mu_{m+2}\leq\mu_{m+1}\leq1$. The
following lemma thereby provides the relationship between the null
space of the matrix-valued Laplacian of $\widetilde{\mathcal{G}}_{[t_{k^{\prime}},t_{k^{\prime\prime}})}$
and the eigenvalues of $\varPhi(t_{k^{\prime\prime}},t_{k^{\prime}})^{\top}\varPhi(t_{k^{\prime\prime}},t_{k^{\prime}})$,
which is paramount in the subsequent analysis.}
\begin{lem}
\textcolor{black}{\label{lem:eigenvalue inequality}Let $\mathcal{G}(t)$
be a matrix-weighted switching network satisfying Assumption 1. Let
$\text{{\bf dim}}(\text{{\bf null}}(\widetilde{L}_{[t_{k^{\prime}},t_{k^{\prime\prime}})}))=m$,
where $m\in\underline{dn}$. Then
\[
\mu_{m+1}(\varPhi(t_{k^{\prime\prime}},t_{k^{\prime}})^{\top}\varPhi(t_{k^{\prime\prime}},t_{k^{\prime}}))<1,
\]
where $k^{\prime}<k^{\prime\prime}\in\mathbb{N}$.}
\end{lem}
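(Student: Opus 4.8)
The plan is to show that the eigenvalue $1$ of $\varPhi(t_{k^{\prime\prime}},t_{k^{\prime}})^{\top}\varPhi(t_{k^{\prime\prime}},t_{k^{\prime}})$ has \emph{exactly} multiplicity $m$; since the eigenvalues are already known to satisfy $\mu_{1}=\cdots=\mu_{m}=1$ together with $\mu_{m+1}\le 1$, an exact multiplicity of $m$ forces $\mu_{m+1}<1$. Write $\varPhi(t_{k^{\prime\prime}},t_{k^{\prime}})=E_{k^{\prime\prime}-1}\cdots E_{k^{\prime}}$ with $E_{i}:=e^{-L^{i}\triangle t_{i}}$. Because each $L^{i}$ is symmetric and positive semi-definite, each $E_{i}$ is symmetric positive definite with $\parallel E_{i}\parallel\le 1$; its eigenvalues are $e^{-\lambda_{j}(L^{i})\triangle t_{i}}\in(0,1]$, equal to $1$ precisely on $\text{{\bf null}}(L^{i})$. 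Consequently $\parallel E_{i}\boldsymbol{v}\parallel=\parallel\boldsymbol{v}\parallel$ holds iff $\boldsymbol{v}\in\text{{\bf null}}(L^{i})$, and in that case $E_{i}\boldsymbol{v}=\boldsymbol{v}$.

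First I would reduce the dimension count to a norm-preservation statement. Since $\parallel\varPhi(t_{k^{\prime\prime}},t_{k^{\prime}})\parallel\le 1$, one has $\varPhi^{\top}\varPhi\preceq I$, so for any $\boldsymbol{v}$ the Rayleigh quotient satisfies $\parallel\varPhi\boldsymbol{v}\parallel^{2}=\boldsymbol{v}^{\top}\varPhi^{\top}\varPhi\boldsymbol{v}\le\parallel\boldsymbol{v}\parallel^{2}$, with equality iff $\boldsymbol{v}$ lies in the eigenspace of $\varPhi^{\top}\varPhi$ associated with the eigenvalue $1$. Hence the multiplicity of the eigenvalue $1$ equals $\text{{\bf dim}}\,\mathcal{S}$, where $\mathcal{S}=\{\boldsymbol{v}\mid\parallel\varPhi\boldsymbol{v}\parallel=\parallel\boldsymbol{v}\parallel\}$, and it suffices to prove $\mathcal{S}=\bigcap_{i\in\underline{k^{\prime\prime}-k^{\prime}}}\text{{\bf null}}(L^{k^{\prime}+i-1})$.

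The key step is to identify $\mathcal{S}$ by a telescoping argument. Applying $\parallel E_{i}\parallel\le 1$ factor by factor gives the chain $\parallel\boldsymbol{v}\parallel\ge\parallel E_{k^{\prime}}\boldsymbol{v}\parallel\ge\parallel E_{k^{\prime}+1}E_{k^{\prime}}\boldsymbol{v}\parallel\ge\cdots\ge\parallel\varPhi\boldsymbol{v}\parallel$. If $\boldsymbol{v}\in\mathcal{S}$ then every inequality is an equality; proceeding inductively from the right, the first equality forces $\boldsymbol{v}\in\text{{\bf null}}(L^{k^{\prime}})$ and thus $E_{k^{\prime}}\boldsymbol{v}=\boldsymbol{v}$, whereupon the next equality forces $\boldsymbol{v}\in\text{{\bf null}}(L^{k^{\prime}+1})$ with $E_{k^{\prime}+1}\boldsymbol{v}=\boldsymbol{v}$, and so on, yielding $\boldsymbol{v}\in\bigcap_{i}\text{{\bf null}}(L^{k^{\prime}+i-1})$. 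Conversely, any $\boldsymbol{v}$ in this intersection is fixed by every $E_{i}$, so $\varPhi\boldsymbol{v}=\boldsymbol{v}$ and $\boldsymbol{v}\in\mathcal{S}$. This establishes the claimed identity, and invoking \textbf{Lemma}~\ref{thm:null space equality} gives $\mathcal{S}=\text{{\bf null}}(\widetilde{L}_{[t_{k^{\prime}},t_{k^{\prime\prime}})})$, whose dimension is $m$ by hypothesis. Therefore the eigenvalue $1$ has multiplicity exactly $m$ and $\mu_{m+1}<1$.

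I expect the main obstacle to be the equality-case analysis of the contraction: norm preservation for the \emph{product} $\varPhi$ must be shown to force norm preservation, and hence fixedness, under each individual factor $E_{i}$. This is where the symmetry of the $E_{i}$, which makes $\parallel E_{i}\boldsymbol{v}\parallel=\parallel\boldsymbol{v}\parallel$ already imply $E_{i}\boldsymbol{v}=\boldsymbol{v}$, together with the inductive propagation of the fixed-point property along the telescoping chain, is essential.
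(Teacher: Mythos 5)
Your proof is correct and follows essentially the same route as the paper's: both rest on telescoping the norm through the factors $e^{-L^{i}\triangle t_{i}}$, using the equality case of the Rayleigh quotient to force membership in each $\text{{\bf null}}(L^{i})$ and hence fixedness under each factor, and then invoking \textbf{Lemma}~\ref{thm:null space equality}. The only cosmetic difference is that you argue directly that the eigenvalue $1$ has multiplicity exactly $m$, whereas the paper runs the identical mechanism as a proof by contradiction starting from $\mu_{m+1}=1$.
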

\begin{proof}
\textcolor{black}{By contradiction, assume that 
\[
\mu_{m+1}(\varPhi(t_{k^{\prime\prime}},t_{k^{\prime}})^{\top}\varPhi(t_{k^{\prime\prime}},t_{k^{\prime}}))=1
\]
for $k^{\prime}<k^{\prime\prime}\in\mathbb{N}$. According to }\textbf{\textcolor{black}{Lemma}}\textcolor{black}{{}
\ref{lem:Rayleigh Theorem}, there exists a non-zero $\boldsymbol{\eta}\notin\text{{\bf null}}(\widetilde{L}_{[t_{k^{\prime}},t_{k^{\prime\prime}})})$
such that
\[
\parallel\boldsymbol{\eta}\parallel=\parallel\varPhi(t_{k^{\prime\prime}},t_{k^{\prime}})\boldsymbol{\eta}\parallel.
\]
}

\textcolor{black}{Denote $\boldsymbol{\eta}_{k^{\prime}}=\boldsymbol{\eta}$
and $\boldsymbol{\eta}_{k^{\prime}+i}=e^{-L^{k^{\prime}+i-1}\triangle t_{k^{\prime}+i-1}}\boldsymbol{\eta}_{k^{\prime}+i-1}$
for all $i\in\underline{k^{\prime\prime}-k^{\prime}}$. Moreover,
$\lambda_{j}(e^{-L^{k^{\prime}+i-1}\triangle t_{k^{\prime}+i-1}})\leq1$
for all $j\in\underline{dn}$, thereby, 
\[
\parallel e^{-L^{k^{\prime}+i-1}\triangle t_{k^{\prime}+i-1}}\boldsymbol{\eta}_{k^{\prime}+i-1}\parallel\leq\parallel\boldsymbol{\eta}_{k^{\prime}+i-1}\parallel,
\]
and  
\begin{align*}
\parallel\boldsymbol{\eta}\parallel & =\parallel\boldsymbol{\eta}_{k^{\prime\prime}}\parallel\leq\parallel\boldsymbol{\eta}_{k^{\prime\prime}-1}\parallel\leq\ldots\leq\parallel\boldsymbol{\eta}_{k^{\prime}}\parallel=\parallel\boldsymbol{\eta}\parallel.
\end{align*}
Hence, 
\[
\parallel e^{-L^{k^{\prime}+i-1}\triangle t_{k^{\prime}+i-1}}\boldsymbol{\eta}_{k^{\prime}+i-1}\parallel=\parallel\boldsymbol{\eta}_{k^{\prime}+i-1}\parallel.
\]
 Then 
\begin{align*}
 & \boldsymbol{\eta}_{k^{\prime}+i-1}^{\top}e^{-L^{k^{\prime}+i-1}\triangle t_{k^{\prime}+i-1}}e^{-L^{k^{\prime}+i-1}\triangle t_{k^{\prime}+i-1}}\boldsymbol{\eta}_{k^{\prime}+i-1}\\
= & \boldsymbol{\eta}_{k^{\prime}+i-1}^{\top}\boldsymbol{\eta}_{k^{\prime}+i-1}.
\end{align*}
By }\textbf{\textcolor{black}{Lemma}}\textcolor{black}{{} \ref{lem:Rayleigh Theorem},
\[
e^{-2L^{k^{\prime}+i-1}\triangle t_{k^{\prime}+i-1}}\boldsymbol{\eta}_{k^{\prime}+i-1}=\boldsymbol{\eta}_{k^{\prime}+i-1},
\]
and thus,
\[
L^{k^{\prime}+i-1}\boldsymbol{\eta}_{k^{\prime}+i-1}=\boldsymbol{0},
\]
implying that $\boldsymbol{\eta}_{k^{\prime}+i-1}\in\text{{\bf null}}(L^{k^{\prime}+i-1})$.
Using the fact 
\begin{align*}
\parallel\boldsymbol{\eta}_{k^{\prime}+i} & -\boldsymbol{\eta}_{k^{\prime}+i-1}\parallel\\
 & =\parallel e^{-L^{k^{\prime}+i-1}\triangle t_{k^{\prime}+i-1}}\boldsymbol{\eta}_{k^{\prime}+i-1}-\boldsymbol{\eta}_{k^{\prime}+i-1}\parallel\\
 & =\parallel\sum_{t=1}^{\infty}\frac{1}{t!}(-L^{k^{\prime}+i-1}\triangle t_{k^{\prime}+i-1})^{t}\boldsymbol{\eta}_{k^{\prime}+i-1}\parallel\\
 & =0,
\end{align*}
one can conclude that $\boldsymbol{\eta}_{k^{\prime}+i-1}=\boldsymbol{\eta}_{k^{\prime}+i}$
for all $i\in\underline{k^{\prime\prime}-k^{\prime}}$, which implies
that $\boldsymbol{\eta}\in\underset{i\in\underline{k^{\prime\prime}-k^{\prime}}}{\cap}\text{{\bf null}}(L^{k^{\prime}+i-1})$,
i.e., $\boldsymbol{\eta}\in\text{{\bf null}}(\widetilde{L}_{[t_{k^{\prime}},t_{k^{\prime\prime}})})$,
leading to a contradiction.}
\end{proof}
\textcolor{black}{Based on the above established Lemmas, we shall
show the main result of this part using null space analysis of matrix-valued
Laplacian related of integral network associated with the switching
networks.}

\begin{thm}
\textcolor{black}{\label{thm:cluster theorem}Let ${\normalcolor \mathcal{G}(t)}$
be a matrix-weighted switching network satisfying }\textbf{\textcolor{black}{Assumption}}\textcolor{black}{{}
1. If there exists a subsequence of $\{t_{k}|k\in\mathbb{N}\}$, denoted
by 
\[
\{t_{k_{l}}|t_{k_{0}}=t_{0},\triangle t_{k_{l}}=t_{k_{l+1}}-t_{k_{l}}<\infty,l\in\mathbb{N}\},
\]
and a scalar $q\in(0,1)$, such that for all $l\in\mathbb{N}$, 
\[
\text{{\bf null}}(\widetilde{L}_{[t_{k_{l}},t_{k_{l+1}})})=\text{{\bf null}}(\widetilde{L}_{[t_{k_{l+1}},t_{k_{l+2}})}),
\]
 and 
\[
\mu_{m+1}(\varPhi(t_{k_{l+1}},t_{k_{l}})^{\top}\varPhi(t_{k_{l+1}},t_{k_{l}}))\leq q,
\]
where $m=\text{{\bf dim}}(\text{{\bf null}}(\widetilde{L}_{[t_{k_{l}},t_{k_{l+1}})}))$.
Then the multi-agent network \eqref{equ:matrix-consensus-overall}
admits the cluster consensus. Moreover, denote $\text{{\bf null}}(\widetilde{L}_{[t_{k_{l}},t_{k_{l+1}})})=\text{{\bf span}}\{\boldsymbol{\xi}_{1},\cdots,\boldsymbol{\xi}_{m}\}$
for all $l\in\mathbb{N}$, where $\boldsymbol{\xi}_{i}\in\mathbb{R}^{dn}$
satisfies 
\[
\boldsymbol{\xi}_{i}^{\top}\boldsymbol{\xi}_{j}=\begin{cases}
1, & i=j\\
0 & i\neq j
\end{cases},\,\forall i,j\in\underline{m},
\]
then the cluster consensus value is 
\[
\boldsymbol{x}^{*}=\sum_{i=1}^{m}(\boldsymbol{\xi}_{i}^{\top}\boldsymbol{x}(0))\boldsymbol{\xi}_{i}.
\]
}
\end{thm}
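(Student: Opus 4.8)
The plan is to work at the level of the sampled sequence $\{t_{k_{l}}\}$ and to show that the state-transition operator over each super-interval acts as the identity on the common null space $N:=\text{{\bf null}}(\widetilde{L}_{[t_{k_{l}},t_{k_{l+1}})})$ and as a strict contraction on $N^{\perp}$, then to upgrade the resulting subsequential convergence to convergence of $\boldsymbol{x}(t)$ for arbitrary $t$. First I would set $\Psi_{l}:=\varPhi(t_{k_{l+1}},t_{k_{l}})$, so that $\varPhi(t_{k_{L}},t_{0})=\Psi_{L-1}\cdots\Psi_{0}$; each $\Psi_{l}$ is a finite product (finite since $\triangle t_{k_{l}}<\infty$ and the dwell time is at least $\alpha$) of the symmetric positive semi-definite factors $e^{-L^{k}\triangle t_{k}}$, each of spectral norm at most $1$, so $\parallel\Psi_{l}\parallel\le1$.

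The first hypothesis, together with a trivial induction, forces all the integral null spaces to coincide with a single subspace $N=\text{{\bf span}}\{\boldsymbol{\xi}_{1},\ldots,\boldsymbol{\xi}_{m}\}$. By \textbf{Lemma} \ref{thm:null space equality}, $N$ equals the intersection of the null spaces of all the Laplacians active on any one super-interval; since the super-intervals tile $[t_{0},\infty)$, this shows that every Laplacian $L(t)$ with $t\ge t_{0}$ annihilates $N$. Consequently $L(t)\boldsymbol{\xi}=\boldsymbol{0}$ and, by symmetry, $\boldsymbol{\xi}^{\top}L(t)=\boldsymbol{0}$ for every $\boldsymbol{\xi}\in N$, whence $\varPhi(t,s)\boldsymbol{\xi}=\boldsymbol{\xi}$ and $\boldsymbol{\xi}^{\top}\varPhi(t,s)=\boldsymbol{\xi}^{\top}$ for all $t\ge s\ge t_{0}$; in particular $\Psi_{l}$ and $\Psi_{l}^{\top}$ both fix $N$ pointwise. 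This two-sided invariance lets me split $\mathbb{R}^{dn}=N\oplus N^{\perp}$: since $N^{\perp}$ is $\Psi_{l}$-invariant, $\Psi_{l}$ commutes with the orthogonal projections $P_{N}=\sum_{i=1}^{m}\boldsymbol{\xi}_{i}\boldsymbol{\xi}_{i}^{\top}$ and $P_{N^{\perp}}=I-P_{N}$.

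For the contraction I would observe that on $N$ the map $\Psi_{l}^{\top}\Psi_{l}$ is the identity, so $N$ lies in its eigenvalue-$1$ eigenspace; since $\dim N=m$ and, by the second hypothesis, $\mu_{m+1}(\Psi_{l}^{\top}\Psi_{l})\le q<1$, exactly $m$ eigenvalues equal $1$, the eigenvalue-$1$ eigenspace is precisely $N$, and every remaining eigenvalue is at most $q$. As $\Psi_{l}^{\top}\Psi_{l}$ is symmetric, $N^{\perp}$ is spanned by eigenvectors with eigenvalues $\le q$, giving $\parallel\Psi_{l}\boldsymbol{v}\parallel\le\sqrt{q}\,\parallel\boldsymbol{v}\parallel$ for all $\boldsymbol{v}\in N^{\perp}$. (Here the uniform bound $q<1$, rather than the merely strict per-interval bound of \textbf{Lemma} \ref{lem:eigenvalue inequality}, is what produces geometric decay.) Writing $\boldsymbol{x}(0)=P_{N}\boldsymbol{x}(0)+P_{N^{\perp}}\boldsymbol{x}(0)$, the commutation relations give $P_{N}\boldsymbol{x}(t_{k_{L}})=P_{N}\boldsymbol{x}(0)$ for every $L$, while $\parallel P_{N^{\perp}}\boldsymbol{x}(t_{k_{L}})\parallel\le q^{L/2}\parallel P_{N^{\perp}}\boldsymbol{x}(0)\parallel\to0$; hence $\boldsymbol{x}(t_{k_{L}})\to\boldsymbol{x}^{*}:=\sum_{i=1}^{m}(\boldsymbol{\xi}_{i}^{\top}\boldsymbol{x}(0))\boldsymbol{\xi}_{i}$.

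The step I expect to be the main obstacle is upgrading this subsequential limit to $\underset{t\to\infty}{\text{{\bf lim}}}\,\boldsymbol{x}(t)=\boldsymbol{x}^{*}$ for arbitrary $t$, since the hypotheses only control the sampling times $t_{k_{l}}$. For $t\in[t_{k_{L}},t_{k_{L+1}})$ I would write $\boldsymbol{x}(t)=\varPhi(t,t_{k_{L}})\boldsymbol{x}(t_{k_{L}})$ and use that $\varPhi(t,t_{k_{L}})$ fixes $N$ (so its $N$-component stays at $\boldsymbol{x}^{*}$) together with $\parallel\varPhi(t,t_{k_{L}})\parallel\le1$, which bounds the remainder by $\parallel P_{N^{\perp}}\boldsymbol{x}(t_{k_{L}})\parallel\le q^{L/2}\parallel P_{N^{\perp}}\boldsymbol{x}(0)\parallel$. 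Because $t\to\infty$ forces $L\to\infty$, this yields the claimed convergence uniformly over the gaps, and $\boldsymbol{x}^{*}\in N$ delivers the cluster structure of \textbf{Definition} \ref{def:cluster-consensus}. The two delicate points are exactly that every \emph{instantaneous} Laplacian (not just those at the sampling times) annihilates $N$, so the gaps do not disturb the $N$-component, and that $q<1$ pins the eigenvalue-$1$ eigenspace down to $N$ precisely.
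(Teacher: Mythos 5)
Your proposal is correct and is essentially the paper's own argument in different notation: the paper works with the error vector $\boldsymbol{\omega}(t)=\boldsymbol{x}(t)-\boldsymbol{x}^{*}$ (which is exactly your $P_{N^{\perp}}$-component once the $N$-component is shown to be conserved), obtains the same per-super-interval contraction $\parallel\boldsymbol{\omega}(t_{k_{l+1}})\parallel\leq q^{1/2}\parallel\boldsymbol{\omega}(t_{k_{l}})\parallel$ from the Rayleigh bound on $\mu_{m+1}$, and covers the intermediate times with the monotonicity of $V(t)=\parallel\boldsymbol{\omega}(t)\parallel^{2}$, which is your $\parallel\varPhi(t,t_{k_{L}})\parallel\leq1$ step. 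Your write-up just makes explicit two points the paper leaves implicit (that every instantaneous Laplacian annihilates $N$, so $\boldsymbol{\omega}(t)$ stays orthogonal to $N$, and that $q<1$ pins the eigenvalue-$1$ eigenspace of $\varPhi^{\top}\varPhi$ down to exactly $N$), but the route is the same.
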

\begin{proof}
\textcolor{black}{Construct the error vector $\boldsymbol{\omega}(t)=\boldsymbol{x}(t)-\boldsymbol{x}^{*}$
which satisfies that $\dot{\boldsymbol{\omega}}(t)=-L(t)\boldsymbol{\omega}(t)$.
Choose $\boldsymbol{\omega}(0)\notin\text{{\bf null}}(\widetilde{L}_{[t_{k_{l}},t_{k_{l+1}})})$
for any $l\in\mathbb{N}$, then $\boldsymbol{\omega}(0)^{\top}\boldsymbol{\xi}_{i}=0$
for all $i\in\underline{m}$. Thus, $\boldsymbol{\omega}(0)\bot\text{{\bf null}}(\widetilde{L}_{[t_{k_{l}},t_{k_{l+1}})})$
for any $l\in\mathbb{N}$. Applying }\textbf{\textcolor{black}{Lemma}}\textcolor{black}{{}
\ref{lem:Rayleigh Theorem} yields
\[
\mu_{m+1}(\varPhi(t_{k_{1}},t_{k_{0}})^{\top}\varPhi(t_{k_{1}},t_{k_{0}}))\geq\frac{\boldsymbol{\omega}(t_{k_{1}})^{\top}\boldsymbol{\omega}(t_{k_{1}})}{\boldsymbol{\omega}(0)^{\top}\boldsymbol{\omega}(0)},
\]
}implying that,
\[
\parallel\boldsymbol{\omega}(t_{k_{1}})\parallel\leq\mu_{m+1}(\varPhi(t_{k_{1}},t_{k_{0}})^{\top}\varPhi(t_{k_{1}},t_{k_{0}}))^{\frac{1}{2}}\parallel\boldsymbol{\omega}(0)\parallel.
\]
Therefore, for any $l\in\mathbb{Z}_{+}$
\begin{align*}
 & \parallel\boldsymbol{\omega}(t_{k_{l+1}})\parallel\\
\leq & \left({\displaystyle \prod_{s=0}^{l}}\mu_{m+1}(\varPhi(t_{k_{s+1}},t_{k_{s}})^{\top}\varPhi(t_{k_{s+1}},t_{k_{s}}))^{\frac{1}{2}}\right)\parallel\boldsymbol{\omega}(0)\parallel\\
\leq & q^{\frac{1}{2}(l+1)}\parallel\boldsymbol{\omega}(0)\parallel.
\end{align*}
Let 
\[
V(t)=\boldsymbol{\omega}(t)^{\top}\boldsymbol{\omega}(t)=\parallel\boldsymbol{\omega}(t)\parallel^{2};
\]
then computing the derivative of $V(t)$ along the trajectories of
system $\dot{\boldsymbol{\omega}}(t)=-L(t)\boldsymbol{\omega}(t)$
yields,
\[
\dot{V}(t)=2\boldsymbol{\omega}(t)^{\top}(-L(t))\boldsymbol{\omega}(t)\leq0.
\]
 Thus
\[
\parallel\boldsymbol{\omega}(t)\parallel\leq\parallel\boldsymbol{\omega}(t_{k_{l}})\parallel\leq q^{\frac{1}{2}l}\parallel\boldsymbol{\omega}(0)\parallel,
\]
for any $t\in[t_{k_{l}},t_{k_{l+1}})$ and $l\in\mathbb{N}$. Note
that $0<q<1$, and hence,
\[
{\displaystyle \lim_{t\rightarrow\infty}}\parallel\boldsymbol{\omega}(t)\parallel=0.
\]
As such, the multi-agent system  \eqref{equ:matrix-consensus-overall}
achieves cluster consensus and the\textcolor{magenta}{{} }\textcolor{black}{cluster
consensus valu}e is $\boldsymbol{x}^{*}=\sum_{i=1}^{m}(\boldsymbol{\xi}_{i}^{\top}\boldsymbol{x}(0))\boldsymbol{\xi}_{i}$.
\end{proof}
\begin{rem}
\textcolor{black}{In the }\textbf{\textcolor{black}{Theorem}}\textcolor{black}{{}
\ref{thm:cluster theorem}, for a matrix-weighted switching network
${\normalcolor \mathcal{G}(t)}$, if there exists a subsequence $\{t_{h_{l}}|t_{h_{0}}=t_{0},\triangle t_{h_{l}}=t_{h_{l+1}}-t_{h_{l}}<\infty,l\in\mathbb{N}\}$
of $\{t_{k}|k\in\mathbb{N}\}$ such that $\text{{\bf null}}(\widetilde{L}_{[t_{h_{l}},t_{h_{l+1}})})=\text{{\bf null}}(\widetilde{L}_{[t_{h_{l+1}},t_{h_{l+2}})})$
for any $l\in\mathbb{N}$, then there does not exist another subsequence
$\{t_{q_{l}}|t_{q_{0}}=t_{0},\triangle t_{q_{l}}=t_{q_{l+1}}-t_{q_{l}}<\infty,l\in\mathbb{N}\}$
of $\{t_{k}|k\in\mathbb{N}\}$ such that $\text{{\bf null}}(\widetilde{L}_{[t_{q_{l}},t_{q_{l+1}})})=\text{{\bf null}}(\widetilde{L}_{[t_{q_{l+1}},t_{q_{l+2}})})$
for any $l\in\mathbb{N}$ and $\text{{\bf null}}(\widetilde{L}_{[t_{h_{l}},t_{h_{l+1}})})\neq\text{{\bf null}}(\widetilde{L}_{[t_{q_{l}},t_{q_{l+1}})})$
for any $l\in\mathbb{N}$. }We shall illustrate this point by contradiction.
Choose one time interval $[t_{m},t_{n})$ where $m<n\in\mathbb{N}$,
such that there exist $l_{0}\in\mathbb{N}$ and $h_{0}\in\mathbb{N}$
satisfying $[t_{k_{l_{0}}},t_{k_{l_{0}+1}})\subseteq[t_{m},t_{n})$
and $[t_{k_{h_{0}}},t_{k_{h_{0}+1}})\subseteq[t_{m},t_{n})$, then
one has $\text{{\bf null}}(\widetilde{L}_{[t_{m},t_{n})})=\text{{\bf null}}(\widetilde{L}_{[t_{k_{l}},t_{k_{l+1}})})$
for any $l\in\mathbb{N}$ and $\text{{\bf null}}(\widetilde{L}_{[t_{m},t_{n})})=\text{{\bf null}}(\widetilde{L}_{[t_{k_{h}},t_{k_{h+1}})})$
for any $h\in\mathbb{N}$; however, $\text{{\bf null}}(\widetilde{L}_{[t_{k_{l}},t_{k_{l+1}})})\neq\text{{\bf null}}(\widetilde{L}_{[t_{k_{h}},t_{k_{h+1}})})$
for any $l\in\mathbb{N}$ and $h\in\mathbb{N}$, which is a contradiction.
\end{rem}
\begin{rem}
\textcolor{black}{Consider a special class of switching networks,
where $\mathcal{G}(t)$ is periodic, i.e., there exists a $T>0$ such
that $\mathcal{G}(t+T)=\mathcal{G}(t)$ for any $t\geq0$. One can
see that it satisfies the condition of }\textbf{\textcolor{black}{Theorem}}\textcolor{black}{{}
\ref{thm:cluster theorem}, therefore, one can apply }\textbf{\textcolor{black}{Theorem}}\textcolor{black}{{}
3 in \citet{trinh2018matrix} on the integral network of $\mathcal{G}(t)$
over one period to derive the cluster situation for switching networks
$\mathcal{G}(t)$.}
\end{rem}
\textcolor{black}{Bipartite consensus is a special case of cluster
consensus in the scalar-weighted time-invariant signed networks. Different
from the scalar-weighted time-invariant signed networks where the
connectivity and the structurally balance of the network can completely
guarantee the bipartite consensus, for the matrix-weighted time-in}variant
signed networks, even if the network is unbalanced, there may be a
bipartite consensus solution. Recently, authors in \citet{su2019bipartite}
provide a necessary and sufficient condition for achieving bipartite
consensus from an algebraic perspective, that is, the null space of
the matrix-valued Laplacian matrix corresponding to the matrix-weighted
signed networks is in the form of $C(1_{n}\otimes\Psi)$, where $\Psi=[\boldsymbol{\varphi}_{1},\boldsymbol{\varphi}_{2},\ldots,\boldsymbol{\varphi}_{m}],\,m\in\mathbb{Z}_{+}$
and $\boldsymbol{\varphi}_{i}\in\mathbb{R}^{d}$, $i\in\underline{m}$,
are mutually perpendicular unit basis vectors, $C=\text{{\bf diag}}\left\{ \sigma_{1},\sigma_{2},\ldots,\sigma_{n}\right\} \in\mathbb{R}^{dn\times dn}$
and $\sigma_{i}=I_{d}$ or $\sigma_{i}=-I_{d}$. Based on these results,
next we shall examine conditions for the bipartite consensus u\textcolor{black}{nder
the matrix-weighted switching networks.}
\begin{cor}
\textcolor{black}{\label{thm: bipartite-consensus-condition} Let
${\normalcolor \mathcal{G}(t)}$ be a matrix-weighted switching network
satisfying }\textbf{\textcolor{black}{Assumption }}\textcolor{black}{1;
furthermore, suppose there exists a subsequence of $\{t_{k}|k\in\mathbb{N}\}$,
denoted by $\{t_{k_{l}}|t_{k_{0}}=t_{0},\triangle t_{k_{l}}=t_{k_{l+1}}-t_{k_{l}}<\infty,l\in\mathbb{N}\}$,
and a scalar $q\in(0,1)$, such that $\text{{\bf null}}(\widetilde{L}_{[t_{k_{l}},t_{k_{l+1}})})=C(1_{n}\otimes\Psi)$
and $\mu_{m+1}(\varPhi(t_{k_{l+1}},t_{k_{l}})^{\top}\varPhi(t_{k_{l+1}},t_{k_{l}}))\leq q$
for all $l\in\mathbb{N}$, where $\Psi=[\boldsymbol{\varphi}_{1},\boldsymbol{\varphi}_{2},\ldots,\boldsymbol{\varphi}_{m}],\,m\in\mathbb{Z}_{+}$
and $\boldsymbol{\varphi}_{i}\in\mathbb{R}^{d}$ is the unit basis
vector and vertical to each other for all $i\in\underline{m}$. Then
the multi-agent network \eqref{equ:matrix-consensus-overall} admits
the bipartite consensus, and the bipartite consensus value is 
\[
\boldsymbol{x}^{*}=C\left(\boldsymbol{1}_{n}\otimes\left(\frac{1}{n}\Psi\left(\boldsymbol{1}_{n}^{\top}\otimes\Psi^{\top}\right)C\boldsymbol{x}(t_{0})\right)\right).
\]
}
\end{cor}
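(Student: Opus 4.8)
The plan is to deduce this corollary from \textbf{Theorem} \ref{thm:cluster theorem} by first checking that its hypotheses hold and then specializing the general cluster-consensus value to the bipartite null-space structure $C(\boldsymbol{1}_{n}\otimes\Psi)$. First I would verify the hypotheses of \textbf{Theorem} \ref{thm:cluster theorem}. Since $\text{{\bf null}}(\widetilde{L}_{[t_{k_{l}},t_{k_{l+1}})})=C(\boldsymbol{1}_{n}\otimes\Psi)$ is assumed for \emph{every} $l\in\mathbb{N}$, the null space is identical across consecutive windows, so the condition $\text{{\bf null}}(\widetilde{L}_{[t_{k_{l}},t_{k_{l+1}})})=\text{{\bf null}}(\widetilde{L}_{[t_{k_{l+1}},t_{k_{l+2}})})$ is automatic; the contraction bound $\mu_{m+1}(\varPhi(t_{k_{l+1}},t_{k_{l}})^{\top}\varPhi(t_{k_{l+1}},t_{k_{l}}))\leq q$ is assumed directly. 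Hence \textbf{Theorem} \ref{thm:cluster theorem} already yields cluster consensus with $\boldsymbol{x}^{*}=\sum_{i=1}^{m}(\boldsymbol{\xi}_{i}^{\top}\boldsymbol{x}(t_{0}))\boldsymbol{\xi}_{i}$ for any orthonormal basis $\{\boldsymbol{\xi}_{1},\ldots,\boldsymbol{\xi}_{m}\}$ of the common null space (recall $t_{0}=0$).

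Next I would exhibit an explicit orthonormal basis of $C(\boldsymbol{1}_{n}\otimes\Psi)$ and collapse the resulting sum. Set $\boldsymbol{\xi}_{i}=\frac{1}{\sqrt{n}}C(\boldsymbol{1}_{n}\otimes\boldsymbol{\varphi}_{i})$. Orthonormality follows from $C^{\top}C=I_{dn}$ (each $\sigma_{i}^{2}=I_{d}$), the Kronecker mixed-product rule, $\boldsymbol{1}_{n}^{\top}\boldsymbol{1}_{n}=n$ and $\boldsymbol{\varphi}_{i}^{\top}\boldsymbol{\varphi}_{j}=\delta_{ij}$, since $\boldsymbol{\xi}_{i}^{\top}\boldsymbol{\xi}_{j}=\frac{1}{n}(\boldsymbol{1}_{n}^{\top}\boldsymbol{1}_{n})(\boldsymbol{\varphi}_{i}^{\top}\boldsymbol{\varphi}_{j})=\delta_{ij}$. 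Substituting into the value from \textbf{Theorem} \ref{thm:cluster theorem} and using $C=C^{\top}$ gives
\[
\boldsymbol{x}^{*}=\frac{1}{n}C\Big(\sum_{i=1}^{m}(\boldsymbol{1}_{n}\otimes\boldsymbol{\varphi}_{i})(\boldsymbol{1}_{n}\otimes\boldsymbol{\varphi}_{i})^{\top}\Big)C\boldsymbol{x}(t_{0})=\frac{1}{n}C\big((\boldsymbol{1}_{n}\boldsymbol{1}_{n}^{\top})\otimes(\Psi\Psi^{\top})\big)C\boldsymbol{x}(t_{0}),
\]
where the last step uses $(\boldsymbol{1}_{n}\otimes\boldsymbol{\varphi}_{i})(\boldsymbol{1}_{n}\otimes\boldsymbol{\varphi}_{i})^{\top}=(\boldsymbol{1}_{n}\boldsymbol{1}_{n}^{\top})\otimes(\boldsymbol{\varphi}_{i}\boldsymbol{\varphi}_{i}^{\top})$ and $\sum_{i=1}^{m}\boldsymbol{\varphi}_{i}\boldsymbol{\varphi}_{i}^{\top}=\Psi\Psi^{\top}$. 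Factoring $(\boldsymbol{1}_{n}\boldsymbol{1}_{n}^{\top})\otimes(\Psi\Psi^{\top})=(\boldsymbol{1}_{n}\otimes\Psi)(\boldsymbol{1}_{n}^{\top}\otimes\Psi^{\top})$ and then pulling the leading factor through via $(\boldsymbol{1}_{n}\otimes\Psi)\boldsymbol{w}=\boldsymbol{1}_{n}\otimes(\Psi\boldsymbol{w})$ for the $m$-vector $\boldsymbol{w}=\frac{1}{n}(\boldsymbol{1}_{n}^{\top}\otimes\Psi^{\top})C\boldsymbol{x}(t_{0})$ produces exactly the claimed $\boldsymbol{x}^{*}=C(\boldsymbol{1}_{n}\otimes(\frac{1}{n}\Psi(\boldsymbol{1}_{n}^{\top}\otimes\Psi^{\top})C\boldsymbol{x}(t_{0})))$.

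Finally I would confirm that this steady state is genuinely a bipartite (two-cluster) consensus. Writing $\boldsymbol{v}=\frac{1}{n}\Psi(\boldsymbol{1}_{n}^{\top}\otimes\Psi^{\top})C\boldsymbol{x}(t_{0})\in\mathbb{R}^{d}$, the block structure of $C=\text{{\bf diag}}\{\sigma_{1},\ldots,\sigma_{n}\}$ gives $\boldsymbol{x}_{i}^{*}=\sigma_{i}\boldsymbol{v}$, so each agent converges to $\boldsymbol{v}$ or $-\boldsymbol{v}$ according to whether $\sigma_{i}=I_{d}$ or $\sigma_{i}=-I_{d}$. This induces the partition $\mathcal{V}_{1}=\{i\mid\sigma_{i}=I_{d}\}$, $\mathcal{V}_{2}=\{i\mid\sigma_{i}=-I_{d}\}$, which is precisely the bipartite structure characterized algebraically in \citet{su2019bipartite}; hence $l=2$ and the system admits bipartite consensus. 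The only step that is not a direct invocation of \textbf{Theorem} \ref{thm:cluster theorem} or an immediate consequence of the sign structure of $C$ is the Kronecker-algebra simplification in the second paragraph, so that is where I would concentrate the care; everything else is bookkeeping.
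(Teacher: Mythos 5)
Your proposal is correct and follows exactly the route the paper intends: the paper's own ``proof'' of this corollary is a one-line deferral to the argument of \textbf{Theorem} \ref{thm:cluster theorem}, and you have simply carried out that specialization, supplying the orthonormal basis $\boldsymbol{\xi}_{i}=\frac{1}{\sqrt{n}}C(\boldsymbol{1}_{n}\otimes\boldsymbol{\varphi}_{i})$ and the Kronecker-product algebra that the paper omits. The verification of the theorem's hypotheses, the collapse of $\sum_{i}\boldsymbol{\xi}_{i}\boldsymbol{\xi}_{i}^{\top}$ to $\frac{1}{n}C\bigl((\boldsymbol{1}_{n}\boldsymbol{1}_{n}^{\top})\otimes(\Psi\Psi^{\top})\bigr)C$, and the identification of the two clusters via the sign pattern of $C$ are all correct.
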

\begin{proof}
\textcolor{black}{The process is similar to the proof of }\textbf{\textcolor{black}{Theorem}}\textcolor{black}{{}
\ref{thm:cluster theorem}, thus we omit here.}
\end{proof}
\begin{rem}
\textcolor{black}{Notably, the number of candidate networks for switching
and dwell times in the aforementioned discussions can be infinite,
which implies that $\left\{ \varPhi(t_{k_{l+1}},t_{k_{l}})^{\top}\varPhi(t_{k_{l+1}},t_{k_{l}})\,|\,l\in\mathbb{N}\right\} $
cannot be generated from a finite set. Therefore, in }\textbf{\textcolor{black}{Corollary}}\textcolor{black}{{}
\ref{thm: bipartite-consensus-condition}, condition $\mu_{m+1}(\varPhi(t_{k_{l+1}},t_{k_{l}})^{\top}\varPhi(t_{k_{l+1}},t_{k_{l}}))\leq q$
is used to ensure bipartite consensus. Subsequently, in order to remove
this condition and obtain the analogous graph-theoretic condition
for reaching bipartite consensus, we proceed to discuss the case where
both the switching networks and the dwell times come from a finite
set \citet{cao2008reaching,ren2005consensus}}\textbf{\textcolor{black}{.}}
\end{rem}
\textcolor{black}{}
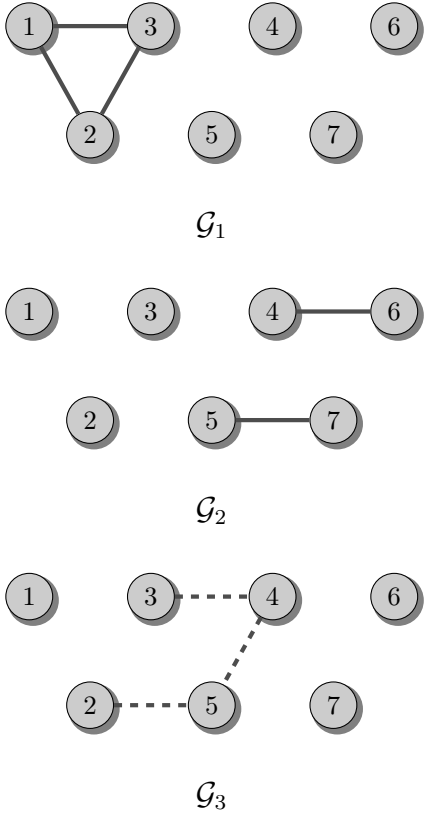
\begin{figure}
\begin{centering}
\textcolor{black}{\begin{tikzpicture}[scale=0.8]
    \node (n1) at (-3,0.3) [circle,circular drop shadow,fill=black!20,draw] {1};
    \node (n2) at (-2,-1.5) [circle,circular drop shadow,fill=black!20,draw] {2};
    \node (n3) at (-1 ,0.3) [circle,circular drop shadow,fill=black!20,draw] {3};
    \node (n4) at (1 ,0.3) [circle,circular drop shadow,fill=black!20,draw] {4};
	\node (n5) at (0,-1.5) [circle,circular drop shadow,fill=black!20,draw] {5};
    \node (n6) at (3 ,0.3) [circle,circular drop shadow,fill=black!20,draw] {6};
    \node (n7) at (2,-1.5) [circle,circular drop shadow,fill=black!20,draw] {7};
    \node (G1) at (0,-3) {\large{$\mathcal{G}_1$}};
	\draw[-, ultra thick, color=black!70] [-] (n1) -- (n2); 
	\draw [-, ultra thick, color=black!70] (n1) -- (n3); 
    \draw [-, ultra thick, color=black!70] (n2) -- (n3); 
\end{tikzpicture}\vskip 0.5cm\begin{tikzpicture}[scale=0.8]
    \node (n1) at (-3,0.3) [circle,circular drop shadow,fill=black!20,draw] {1};
    \node (n2) at (-2,-1.5) [circle,circular drop shadow,fill=black!20,draw] {2};
    \node (n3) at (-1 ,0.3) [circle,circular drop shadow,fill=black!20,draw] {3};
    \node (n4) at (1 ,0.3) [circle,circular drop shadow,fill=black!20,draw] {4};
	\node (n5) at (0,-1.5) [circle,circular drop shadow,fill=black!20,draw] {5};
    \node (n6) at (3 ,0.3) [circle,circular drop shadow,fill=black!20,draw] {6};
    \node (n7) at (2,-1.5) [circle,circular drop shadow,fill=black!20,draw] {7};
    \node (G2) at (0,-3) {\large{$\mathcal{G}_2$}};
	\draw[-, ultra thick, color=black!70] (n4) -- (n6); 
	\draw[-, ultra thick, color=black!70] (n5) -- (n7); 
\end{tikzpicture}\vskip 0.5cm\begin{tikzpicture}[scale=0.8]	
    \node (n1) at (-3,0.3) [circle,circular drop shadow,fill=black!20,draw] {1};
    \node (n2) at (-2,-1.5) [circle,circular drop shadow,fill=black!20,draw] {2};
    \node (n3) at (-1 ,0.3) [circle,circular drop shadow,fill=black!20,draw] {3};
    \node (n4) at (1 ,0.3) [circle,circular drop shadow,fill=black!20,draw] {4};
	\node (n5) at (0,-1.5) [circle,circular drop shadow,fill=black!20,draw] {5};
    \node (n6) at (3 ,0.3) [circle,circular drop shadow,fill=black!20,draw] {6};
    \node (n7) at (2,-1.5) [circle,circular drop shadow,fill=black!20,draw] {7};
    \node (G3) at (0, -3) {\large{$\mathcal{G}_3$}};
	\draw[-, ultra thick, dashed, color=black!70] (n3) -- (n4); 
    \draw[-, ultra thick, dashed, color=black!70] (n4) -- (n5); 
    \draw[-, ultra thick, dashed, color=black!70] (n2) -- (n5); 	
\end{tikzpicture}}
\par\end{centering}
\textcolor{black}{\caption{Three matrix-weighted networks $\mathcal{G}_{1}$, $\mathcal{G}_{2}$
and $\mathcal{G}_{3}$. \textcolor{black}{Those edges weighted by
positive definite matrices are illustrated by solid lines and edges
weighted by positive semi-definite matrices are illustrated by dotted
lines.}}
\label{fig:example-network-1}}
\end{figure}
\textbf{\textcolor{black}{Assumption 3. }}\textcolor{black}{In addition
to }\textbf{\textcolor{black}{Assumption }}\textcolor{black}{1 and
}\textbf{\textcolor{black}{Assumption }}\textcolor{black}{2, the dwell
time $\triangle t_{k}=t_{k+1}-t_{k}$ ($k\in\mathbb{N}$) is chosen
from a finite set of arbitrary positive numbers.}
\begin{defn}[Simultaneously Structurally Balanced]
\textcolor{black}{ A matrix-weighted switching network $\mathcal{G}(t)$
is simultaneously structurally balanced if there exists a time-invariant
bipartition of the node set $\mathcal{V}$, say $\mathcal{V}_{1}$
and $\mathcal{V}_{2}$, such that the matrix weights on the edges
within each subset is positive definite or positive semi-definite,
but negative definite or negative semi-definite for the edges between
the two subsets. A matrix-weighted switching network is simultaneously
structurally imbalanced if it is not simultaneously structurally balanced.}
\end{defn}
\textcolor{black}{On the basis of the above discussions, an analogous
graph-theoretic condition by use of simultaneously structurally balance
is as }follows.
\begin{cor}
\label{thm:structure-condition-bipartite-consensus} Let ${\normalcolor \mathcal{G}(t)}$
be a matrix-weighted simultaneously structurally balanced switching
network satisfying Assumption 3 with a time-invariant node set bipartition
$\mathcal{V}_{1}$ and $\mathcal{V}_{2}$; if there exists a subsequence
of $\{t_{k}|k\in\mathbb{N}\}$, denoted by $\{t_{k_{l}}|t_{k_{0}}=t_{0},\forall l\in\mathbb{N}\}$,
and $h>0$ such that $\triangle t_{k_{l}}=t_{k_{l+1}}-t_{k_{l}}\leq h$
and  the integral graph of $\mathcal{G}(t)$ over time span $[t_{k_{l}},t_{k_{l+1}})$
has a positive-negative spanning tree for all $l\in\mathbb{N}$, then
the multi-agent network \eqref{equ:matrix-consensus-overall} admits
the bipartite consensus, and the bipartite consensus value is 
\[
\boldsymbol{x}^{*}=C\left(\boldsymbol{1}_{n}\otimes\left(\frac{1}{n}\left(\boldsymbol{1}_{n}^{\top}\otimes I_{d}\right)C\boldsymbol{x}(0)\right)\right),
\]
where $C=\text{{\bf diag}}\left\{ \sigma_{1},\sigma_{2},\ldots,\sigma_{n}\right\} \in\mathbb{R}^{dn\times dn}$
satisfies $\sigma_{i}=I_{d}$ if $i\in\mathcal{V}_{1}$ and $\sigma_{i}=-I_{d}$
if $i\in\mathcal{V}_{2}$. 
\end{cor}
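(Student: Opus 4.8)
The plan is to reduce the statement to \textbf{Corollary} \ref{thm: bipartite-consensus-condition} by showing that the graph-theoretic hypotheses force both of its analytic conditions. First I would pin down the null space of each integral Laplacian. Since each edge keeps a consistent sign over time and $\mathcal{G}(t)$ is simultaneously structurally balanced with the fixed bipartition $\mathcal{V}_1,\mathcal{V}_2$, the gauge matrix $C=\text{{\bf diag}}\{\sigma_1,\ldots,\sigma_n\}$ converts every integral weight $\widetilde{A}_{ij}$ into $\sigma_i\widetilde{A}_{ij}\sigma_j\succeq0$, so $C\widetilde{L}_{[t_{k_l},t_{k_{l+1}})}C$ is the matrix-valued Laplacian of an all-positive integral network. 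Writing the quadratic form $\boldsymbol{\eta}^{\top}\widetilde{L}_{[t_{k_l},t_{k_{l+1}})}\boldsymbol{\eta}$ as a sum of terms $(\boldsymbol{\eta}_i-\text{{\bf sgn}}(\widetilde{A}_{ij})\boldsymbol{\eta}_j)^{\top}|\widetilde{A}_{ij}|(\boldsymbol{\eta}_i-\text{{\bf sgn}}(\widetilde{A}_{ij})\boldsymbol{\eta}_j)$ and using that the edges of the positive-negative spanning tree are definite, the form vanishes only if $\boldsymbol{\eta}_i=\text{{\bf sgn}}(\widetilde{A}_{ij})\boldsymbol{\eta}_j$ along every tree edge; propagating this equality through the spanning tree (with sign $+$ inside a cluster and $-$ across clusters) yields $\boldsymbol{\eta}=C(\boldsymbol{1}_n\otimes\boldsymbol{v})$ for some $\boldsymbol{v}\in\mathbb{R}^d$. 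Hence $\text{{\bf null}}(\widetilde{L}_{[t_{k_l},t_{k_{l+1}})})=C(\boldsymbol{1}_n\otimes I_d)$ for \emph{every} $l$, which is exactly the bipartite form of \citet{su2019bipartite} with $\Psi=I_d$ and $m=d$, and which simultaneously verifies the null-space invariance condition $\text{{\bf null}}(\widetilde{L}_{[t_{k_l},t_{k_{l+1}})})=\text{{\bf null}}(\widetilde{L}_{[t_{k_{l+1}},t_{k_{l+2}})})$ required by \textbf{Theorem} \ref{thm:cluster theorem}.

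Next I would manufacture the uniform contraction scalar $q\in(0,1)$; I expect this to be the crux of the argument, since \textbf{Lemma} \ref{lem:eigenvalue inequality} only yields the strict inequality $\mu_{m+1}(\varPhi(t_{k_{l+1}},t_{k_l})^{\top}\varPhi(t_{k_{l+1}},t_{k_l}))<1$ for each fixed $l$, not a bound uniform in $l$. Here \textbf{Assumption} 3 is essential: the switching graphs are drawn from the finite set $\{\mathcal{G}_1,\ldots,\mathcal{G}_M\}$ and the dwell times from a finite set, while \textbf{Assumption} 1 gives $\triangle t_k\geq\alpha$ and the hypothesis gives $\triangle t_{k_l}\leq h$. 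Consequently each super-interval $[t_{k_l},t_{k_{l+1}})$ contains at most $\lfloor h/\alpha\rfloor$ constituent dwell intervals, and every transition matrix $\varPhi(t_{k_{l+1}},t_{k_l})=e^{-L^{k_{l+1}-1}\triangle t_{k_{l+1}-1}}\cdots e^{-L^{k_l}\triangle t_{k_l}}$ is a product of boundedly many factors each selected from the finite collection of matrices $e^{-L_i\tau}$, with $L_i$ the Laplacian of some $\mathcal{G}_i$ and $\tau$ an admissible dwell time. Therefore the set $\{\varPhi(t_{k_{l+1}},t_{k_l})^{\top}\varPhi(t_{k_{l+1}},t_{k_l})\mid l\in\mathbb{N}\}$ is finite, so \textbf{Lemma} \ref{lem:eigenvalue inequality} applies to each of its finitely many elements and I may set
\[
q=\underset{l\in\mathbb{N}}{\text{{\bf max}}}\;\mu_{m+1}(\varPhi(t_{k_{l+1}},t_{k_l})^{\top}\varPhi(t_{k_{l+1}},t_{k_l}))<1,
\]
the maximum being attained over a finite set and hence strictly below $1$.

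Finally, with the null-space invariance condition of the first paragraph and the uniform bound $\mu_{m+1}(\varPhi(t_{k_{l+1}},t_{k_l})^{\top}\varPhi(t_{k_{l+1}},t_{k_l}))\leq q$ of the second in hand, the hypotheses of \textbf{Corollary} \ref{thm: bipartite-consensus-condition} hold with $\Psi=I_d$. Invoking it gives bipartite consensus, and substituting $\Psi=I_d$ (so $\Psi\Psi^{\top}=I_d$) into its steady-state formula collapses the value to
\[
\boldsymbol{x}^{*}=C\left(\boldsymbol{1}_{n}\otimes\left(\frac{1}{n}\left(\boldsymbol{1}_{n}^{\top}\otimes I_{d}\right)C\boldsymbol{x}(0)\right)\right),
\]
as claimed. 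The only routine check remaining is that the orthonormal basis $\boldsymbol{\xi}_i=\tfrac{1}{\sqrt{n}}C(\boldsymbol{1}_n\otimes\boldsymbol{\varphi}_i)$ of $C(\boldsymbol{1}_n\otimes I_d)$ reproduces this projection in the expression $\boldsymbol{x}^{*}=\sum_{i=1}^{d}(\boldsymbol{\xi}_i^{\top}\boldsymbol{x}(0))\boldsymbol{\xi}_i$ of \textbf{Theorem} \ref{thm:cluster theorem}, which follows from $\sum_{i=1}^{d}\boldsymbol{\varphi}_i\boldsymbol{\varphi}_i^{\top}=I_d$ and $C^{\top}=C$.
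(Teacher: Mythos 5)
Your proposal is correct and follows essentially the same route as the paper's proof: establish $\text{{\bf null}}(\widetilde{L}_{[t_{k_{l}},t_{k_{l+1}})})=C(\boldsymbol{1}_{n}\otimes I_{d})$ from simultaneous structural balance together with the positive-negative spanning tree, use \textbf{Assumption} 3 to make $\{\varPhi(t_{k_{l+1}},t_{k_{l}})^{\top}\varPhi(t_{k_{l+1}},t_{k_{l}})\mid l\in\mathbb{N}\}$ a finite set so that \textbf{Lemma} \ref{lem:eigenvalue inequality} yields a uniform $q<1$, and then feed these into \textbf{Theorem} \ref{thm:cluster theorem} (equivalently \textbf{Corollary} \ref{thm: bipartite-consensus-condition} with $\Psi=I_{d}$). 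The only differences are in level of detail: where the paper cites an external result for the null-space characterization and merely asserts the finiteness of the transition-matrix set, you prove the former directly via the gauge transformation and edge-wise quadratic form, and justify the latter via the $\lfloor h/\alpha\rfloor$ bound on the number of dwell intervals per super-interval.
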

\begin{proof}
\textcolor{black}{Since ${\normalcolor \mathcal{G}(t)}$ is simultaneously
structurally balanced, then the integral graph of $\mathcal{G}(t)$
over time span $[t_{k_{l}},t_{k_{l+1}})$ for any $l\in\mathbb{N}$
is structurally balanced. In addition, the integral graph of $\mathcal{G}(t)$
over time span $[t_{k_{l}},t_{k_{l+1}})$ has a positive-negative
spanning tree for all $l\in\mathbb{N}$. Therefore, according to the
}\textbf{\textcolor{black}{Theorem}}\textcolor{black}{{} 2 in \citet{pan2018bipartite},
$\text{{\bf null}}(\widetilde{L}_{[t_{k_{l}},t_{k_{l+1}})})=C(1_{n}\otimes I_{d})$,
one can conclude that $\mu_{d+1}(\varPhi(t_{k_{l+1}},t_{k_{l}})^{\top}\varPhi(t_{k_{l+1}},t_{k_{l}}))<1$
for all $l\in\mathbb{N}$ by }\textbf{\textcolor{black}{Lemma}}\textcolor{black}{{}
\ref{lem:eigenvalue inequality}. Note that }\textbf{\textcolor{black}{Assumption
}}\textcolor{black}{3 ensures that $\left\{ \varPhi(t_{k_{l+1}},t_{k_{l}})^{\top}\varPhi(t_{k_{l+1}},t_{k_{l}})\,|\,l\in\mathbb{N}\right\} $
can be generated from a finite set. Now choose 
\[
q=\underset{l\in\mathbb{N}}{\text{{\bf max}}}\left\{ \mu_{d+1}(\varPhi(t_{k_{l+1}},t_{k_{l}})^{\top}\varPhi(t_{k_{l+1}},t_{k_{l}}))\right\} ;
\]
hence according to the proof of }\textbf{\textcolor{black}{Theorem}}\textcolor{black}{{}
\ref{thm:cluster theorem}, the multi-agent system \eqref{equ:matrix-consensus-overall}
admits bipartite consensus.}
\end{proof}

\section{\textcolor{black}{Simulation Results \label{sec:Simulation-Results}}}

\textcolor{black}{Consider a sequence of matrix-weighted networks,
consisting of (the same) seven agents, where their interaction networks
are $\mathcal{G}_{1},\mathcal{G}_{2}$ and $\mathcal{G}_{3}$, respectively,
as shown in Figure \ref{fig:example-network-1}. Note that $n=7$
and $d=3$ in this example. The matrix-valued edge weights for each
network are, 
\[
A_{12}(\mathcal{G}_{1})=\left[\begin{array}{ccc}
2 & -1 & -1\\
-1 & 3 & -1\\
-1 & -1 & 2
\end{array}\right],\thinspace\thinspace A_{13}(\mathcal{G}_{1})=\left[\begin{array}{ccc}
1 & 1 & 0\\
1 & 2 & 0\\
0 & 0 & 3
\end{array}\right],
\]
\[
A_{23}(\mathcal{G}_{1})=\left[\begin{array}{ccc}
3 & 1 & -1\\
1 & 2 & -1\\
-1 & -1 & 2
\end{array}\right],\thinspace\thinspace A_{46}(\mathcal{G}_{2})=\left[\begin{array}{ccc}
1 & 1 & 0\\
1 & 2 & 0\\
0 & 0 & 3
\end{array}\right],
\]
}

\textcolor{black}{
\[
A_{57}(\mathcal{G}_{2})=\left[\begin{array}{ccc}
2 & -1 & -1\\
-1 & 3 & -1\\
-1 & -1 & 2
\end{array}\right],\thinspace\thinspace A_{34}(\mathcal{G}_{3})=\left[\begin{array}{ccc}
4 & 0 & -2\\
0 & 1 & 1\\
-2 & 1 & 2
\end{array}\right],
\]
\[
\thinspace\thinspace A_{25}(\mathcal{G}_{3})=\left[\begin{array}{ccc}
4 & 2 & 0\\
2 & 2 & 1\\
0 & 1 & 1
\end{array}\right],\thinspace\thinspace A_{45}(\mathcal{G}_{3})=\left[\begin{array}{ccc}
4 & 2 & 0\\
2 & 4 & 3\\
0 & 3 & 3
\end{array}\right].
\]
}

\textcolor{black}{Consider a time sequence $\{t_{k}\thinspace|\thinspace k\in\mathbb{N}\}$
such that $t_{k}=k\Delta t$ where $\Delta t>0$. The coordination
process is initiated from network $\mathcal{G}_{1}$ (i.e., ${\normalcolor \mathcal{G}(0)=\mathcal{G}_{1}}$)
with 
\[
\boldsymbol{x}_{1}(0)=[0.3922,\,0.6555,\,0.1712]^{\top},
\]
\[
\boldsymbol{x}_{2}(0)=[0.7060,\,0.0318,\,0.5762]^{\top},
\]
\[
\boldsymbol{x}_{3}(0)=[0.2688,\,0.1592,\,0.3266]^{\top},
\]
\[
\boldsymbol{x}_{4}(0)=[0.6787,\,0.7577,\,0.7431]^{\top},
\]
\[
\boldsymbol{x}_{5}(0)=[0.3830,\,0.6112,\,0.1212]^{\top},
\]
\[
\boldsymbol{x}_{6}(0)=[0.3555,\,0.9712,\,0.8060]^{\top},
\]
and
\[
\boldsymbol{x}_{7}(0)=[0.1318,\,0.7762,\,0.3688]^{\top}.
\]
}

\textcolor{black}{The switching among networks $\mathcal{G}_{1},\mathcal{G}_{2}$
and $\mathcal{G}_{3}$ satisfies,}

\textcolor{black}{
\begin{equation}
{\normalcolor \mathcal{G}(t)}=\begin{cases}
\begin{array}{c}
\mathcal{G}_{1},\\
\mathcal{G}_{2},\\
\mathcal{G}_{3},
\end{array} & \begin{array}{c}
t\in[t_{6l},t_{6l+2}),\\
t\in[t_{6l+2},t_{6l+5}),\\
t\in[t_{6l+5},t_{6(l+1)}),
\end{array}\end{cases}\label{eq:sp-2}
\end{equation}
where $l\in\mathbb{N}$. }
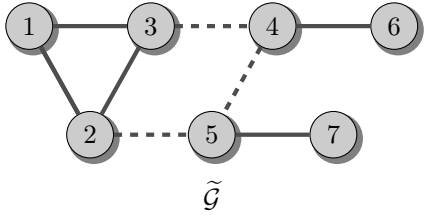
\begin{figure}
\begin{centering}
\textcolor{black}{\begin{tikzpicture}[scale=0.8]
	\node (n1) at (-3,0.3) [circle,circular drop shadow,fill=black!20,draw] {1};
    \node (n2) at (-2,-1.5) [circle,circular drop shadow,fill=black!20,draw] {2};
    \node (n3) at (-1 ,0.3) [circle,circular drop shadow,fill=black!20,draw] {3};
    \node (n4) at (1 ,0.3) [circle,circular drop shadow,fill=black!20,draw] {4};
	\node (n5) at (0,-1.5) [circle,circular drop shadow,fill=black!20,draw] {5};
    \node (n6) at (3 ,0.3) [circle,circular drop shadow,fill=black!20,draw] {6};
    \node (n7) at (2,-1.5) [circle,circular drop shadow,fill=black!20,draw] {7};
    \node (G2) at (0,-2.5) {{$\widetilde{\mathcal{G}}$}};
	\draw[-, ultra thick, color=black!70] (n1) -- (n2); 
	\draw[-, ultra thick, color=black!70] (n2) -- (n3); 
    \draw[-, ultra thick, color=black!70] (n1) -- (n3); 
    \draw[-, ultra thick, dashed, color=black!70] (n2) -- (n5);
    \draw[-, ultra thick, dashed, color=black!70] (n3) -- (n4);
    \draw[-, ultra thick, dashed, color=black!70] (n4) -- (n5); 
    \draw[-, ultra thick, color=black!70] (n1) -- (n2); 
    \draw[-, ultra thick, color=black!70] (n4) -- (n6); 
    \draw[-, ultra thick, color=black!70] (n5) -- (n7); 
\end{tikzpicture}}
\par\end{centering}
\textcolor{black}{\caption{The integral graph of $\mathcal{G}(t)$ over time span $[t_{6l},t_{6(l+1)})$
where $l\in\mathbb{N}$.}
\label{fig:integral-network}}
\end{figure}
\textcolor{black}{The integral graph of $\mathcal{G}_{1}$, $\mathcal{G}_{2}$
and $\mathcal{G}_{3}$ over time span $[t_{6l},t_{6(l+1)})$, where
$l\in\mathbb{N}$, denoted by $\widetilde{\mathcal{G}}$, is shown
in Figure \ref{fig:integral-network}. One can see that it satisfies
the conditions in }\textbf{\textcolor{black}{Theorem }}\textcolor{black}{\ref{thm:cluster theorem}}\textbf{\textcolor{black}{,}}\textcolor{black}{{}
and the multi-agent system \eqref{equ:matrix-consensus-overall} on
the switching networks admits cluster consensus as shown in Figure
\ref{fig:trajectory}, which is the same as the system on the integral
network; see Figure \ref{fig:trajectory-2}. The cluster conditions
associated with the integral network of $\mathcal{G}(t)$ over one
period can therefore be employed to construct that applicable to switching
networks $\mathcal{G}(t)$ over $[0,\infty)$.}

\textcolor{black}{Consider a variant of the above Example by only
changing the matrix weights on edges $(3,4)$, $(2,5)$ and $(4,5)$
into 
\[
A_{34}(\mathcal{G}_{3})=-\left[\begin{array}{ccc}
4 & 0 & -2\\
0 & 2 & 1\\
-2 & 1 & 2
\end{array}\right],
\]
}

\textcolor{black}{
\[
A_{25}(\mathcal{G}_{3})=-\left[\begin{array}{ccc}
4 & 2 & 0\\
2 & 2 & 1\\
0 & 1 & 1
\end{array}\right],
\]
and 
\[
A_{45}(\mathcal{G}_{3})=\left[\begin{array}{ccc}
4 & 2 & 0\\
2 & 5 & 3\\
0 & 3 & 3
\end{array}\right],
\]
respectively. One can see that $\mathcal{G}(t)$ is simultaneously
structurally balanced and the integral graph of ${\color{black}\mathcal{G}(t)}$
over time span $[t_{6l},t_{6(l+1)})$, where $l\in\mathbb{N}$, denoted
by $\widetilde{\mathcal{G}}$, has a positive-negative spanning tree
$\mathcal{T}(\widetilde{\mathcal{G}})$. Therefore, according to }\textbf{\textcolor{black}{Corollary}}\textcolor{black}{{}
\ref{thm:structure-condition-bipartite-consensus}, the multi-agent
system \eqref{equ:matrix-consensus-overall} admits bipartite consensus;
see Figure \ref{fig:trajectory-3}.}
\begin{figure}
\begin{centering}
\textcolor{black}{\includegraphics[width=9cm]{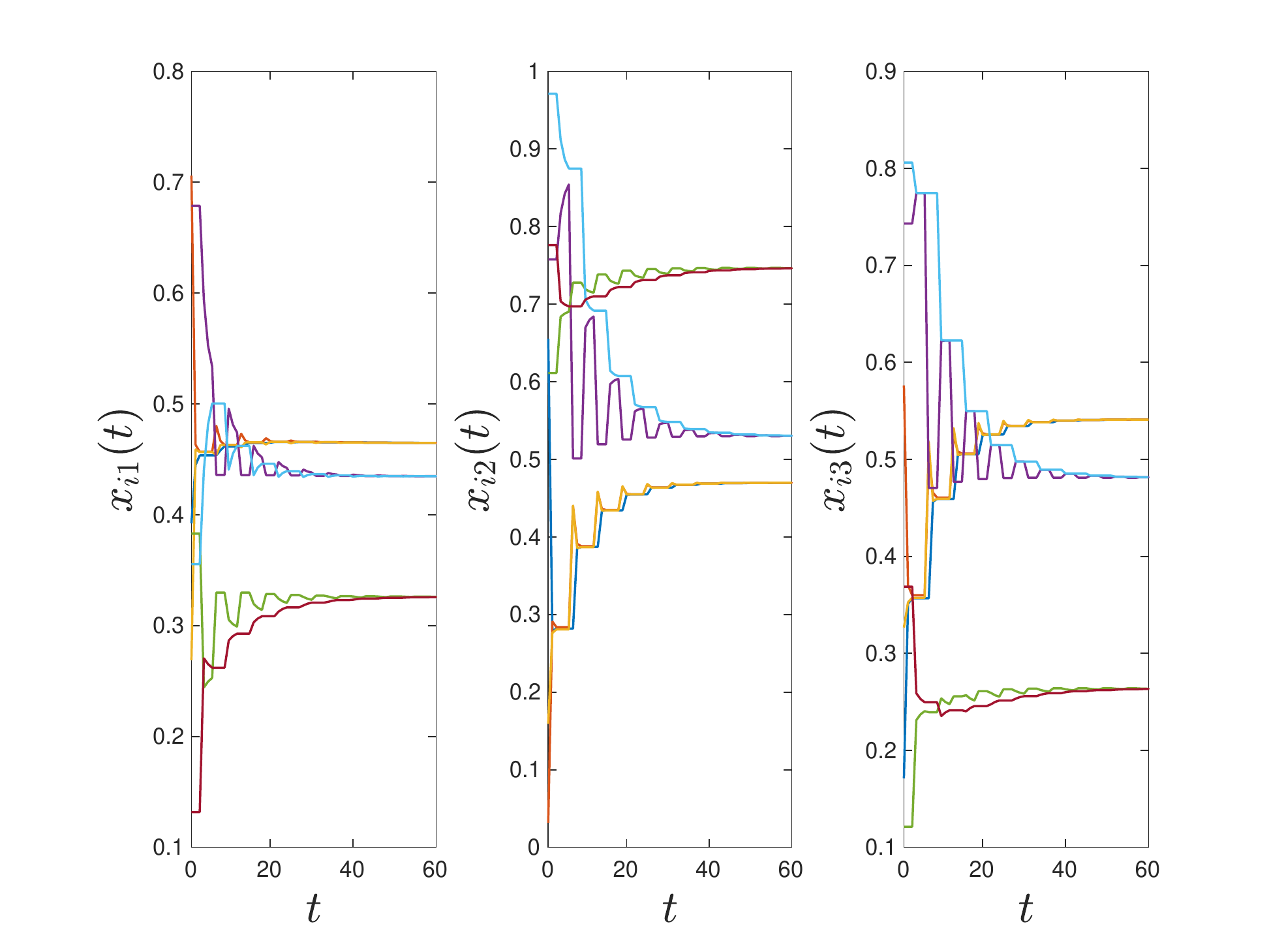}}
\par\end{centering}
\textcolor{black}{\caption{State evolution of the multi-agent system \eqref{equ:matrix-consensus-overall}
on a sequence of networks in Figure \ref{fig:example-network-1} with
switching sequences as \eqref{eq:sp-2}.}
\label{fig:trajectory}}
\end{figure}
\textcolor{black}{}
\begin{figure}
\begin{centering}
\textcolor{black}{\includegraphics[width=9cm]{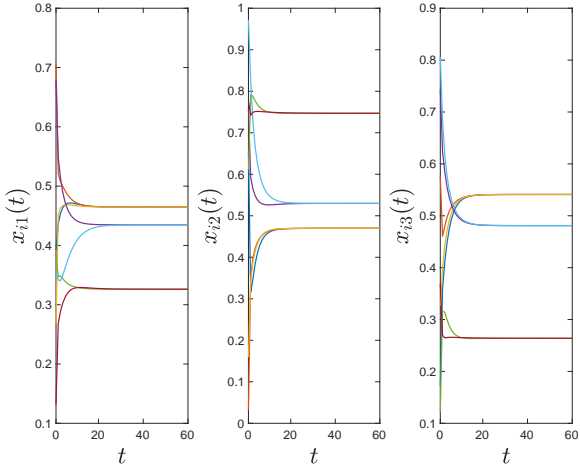}}
\par\end{centering}
\textcolor{black}{\caption{State evolution of the multi-agent system \eqref{equ:matrix-consensus-overall}
on the integral network in Figure \ref{fig:integral-network}.}
\label{fig:trajectory-2}}
\end{figure}
\textcolor{black}{}
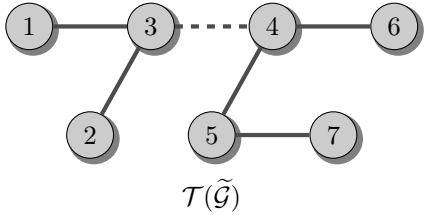
\begin{figure}
\begin{centering}
\textcolor{black}{\begin{tikzpicture}[scale=0.8]
	\node (n1) at (-3,0.3) [circle,circular drop shadow,fill=black!20,draw] {1};
    \node (n2) at (-2,-1.5) [circle,circular drop shadow,fill=black!20,draw] {2};
    \node (n3) at (-1 ,0.3) [circle,circular drop shadow,fill=black!20,draw] {3};
    \node (n4) at (1 ,0.3) [circle,circular drop shadow,fill=black!20,draw] {4};
	\node (n5) at (0,-1.5) [circle,circular drop shadow,fill=black!20,draw] {5};
    \node (n6) at (3 ,0.3) [circle,circular drop shadow,fill=black!20,draw] {6};
    \node (n7) at (2,-1.5) [circle,circular drop shadow,fill=black!20,draw] {7};  
    \node (G2) at (0,-2.5) {{$\mathcal{T}(\widetilde{\mathcal{G}}$})};
	\draw[-, ultra thick, color=black!70] (n2) -- (n3); 
    \draw[-, ultra thick, color=black!70] (n1) -- (n3); 
    \draw[-, ultra thick, dashed, color=black!70] (n3) -- (n4);
    \draw[-, ultra thick, color=black!70] (n4) -- (n5); 
    \draw[-, ultra thick, color=black!70] (n4) -- (n6); 
    \draw[-, ultra thick, color=black!70] (n5) -- (n7); 
\end{tikzpicture}}
\par\end{centering}
\textcolor{black}{\caption{The positive-negative spanning tree $\mathcal{T}(\widetilde{\mathcal{G}})$
of the integral graph $\mathcal{G}(t)$ over time span $[t_{6l},t_{6(l+1)})$
where $l\in\mathbb{N}$. \textcolor{black}{Those edges weighted by
positive definite matrices are illustrated by solid lines and edges
weighted by negative definite matrices are illustrated by dotted lines.}}
\label{fig:integral-network-1}}
\end{figure}
\textcolor{black}{}
\begin{figure}
\begin{centering}
\textcolor{black}{\includegraphics[width=9cm]{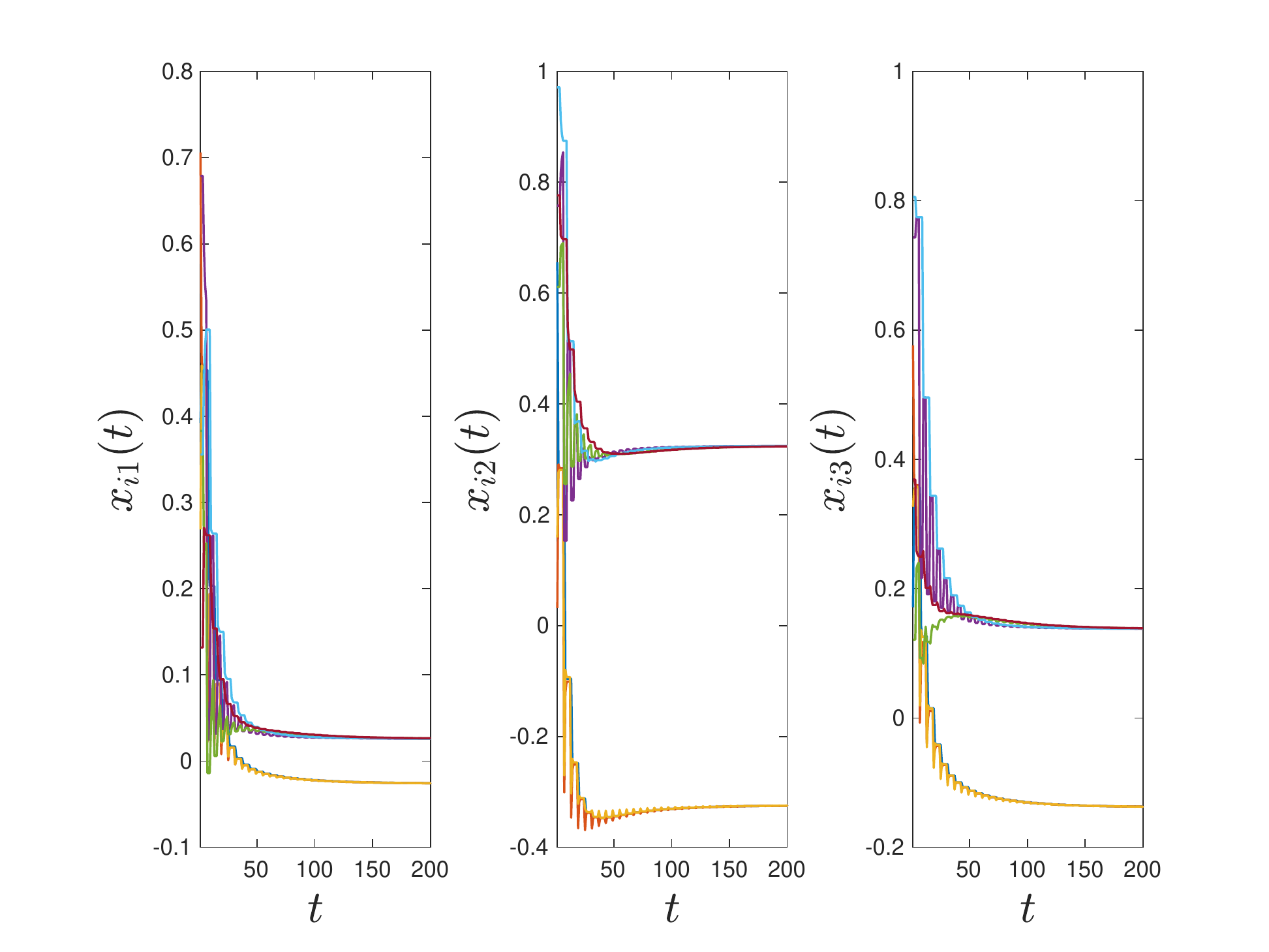}}
\par\end{centering}
\textcolor{black}{\caption{State evolution of the multi-agent system \eqref{equ:matrix-consensus-overall}
on a sequence of networks in Figure \ref{fig:example-network-1} with
switching sequences as \eqref{eq:sp-2}.}
\label{fig:trajectory-3}}
\end{figure}

\section{\textcolor{black}{Conclusion \label{sec:Conclusion}}}

This paper examines cluster consensus problems on matrix-weighted
switching networks. For such networks, necessary and/or sufficient
conditions for reaching cluster consensus that can be quantitatively
characterized are provided. It is shown that if the matrix-weighted
switching networks achieve the cluster consensus, then the cluster
consensus value belongs to the intersection of the null space of all
matrix-valued Laplacians. Furthermore, sufficient conditions for cluster
consensus are obtained using the matrix-valued Laplacian of the associated
integral network. In particular, conditions for bipartite consensus
is further provided under the condition the matrix-weighted switching
networks is simultaneously structurally balanced, as well as the explicit
expression of convergence state.

\section{\textcolor{black}{Appendix }}
\begin{lem}
\textcolor{black}{\label{lem:Rayleigh Theorem}\citet[p.235]{horn2012matrix}
(Rayleigh Theorem) Let $M\in\mathbb{R}^{n\times n}$ be symmetric
with eigenvalues $\lambda_{1}\leq\cdots\leq\lambda_{n}$. Let $\boldsymbol{x}_{1},\cdots,\boldsymbol{x}_{n}$
be corresponding mutually orthonormal vectors such that $M\boldsymbol{x}_{p}=\lambda_{p}\boldsymbol{x}_{p}$,
where $p\in\underline{n}$. Then, 
\[
\lambda_{1}\le\boldsymbol{x}^{\top}M\boldsymbol{x}\le\lambda_{n}
\]
for any unit vector $\boldsymbol{x}\in\mathbb{\mathbb{R}}^{n}$, with
equality in the right-hand (respectively, left-hand) inequality if
and only if $M\boldsymbol{x}\text{=}\lambda_{n}\boldsymbol{x}$ (respectively,
$M\boldsymbol{x}\text{=}\lambda_{1}\boldsymbol{x}$); moreover, 
\[
\lambda_{n}=\underset{\boldsymbol{x}\not={\bf 0}}{\text{{\bf max}}}\frac{\boldsymbol{x}^{\top}M\boldsymbol{x}}{\boldsymbol{x}^{\top}\boldsymbol{x}},
\]
and
\[
\lambda_{1}=\underset{\boldsymbol{x}\not={\bf 0}}{\text{{\bf min}}}\frac{\boldsymbol{x}^{\top}M\boldsymbol{x}}{\boldsymbol{x}^{\top}\boldsymbol{x}}.
\]
}
\end{lem}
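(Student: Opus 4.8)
The plan is to invoke the spectral theorem, which guarantees that the symmetric matrix $M$ possesses the orthonormal eigenbasis $\boldsymbol{x}_{1},\ldots,\boldsymbol{x}_{n}$ already posited in the statement. First I would expand an arbitrary unit vector $\boldsymbol{x}$ in this basis as $\boldsymbol{x}=\sum_{p\in\underline{n}}c_{p}\boldsymbol{x}_{p}$ with coefficients $c_{p}=\boldsymbol{x}_{p}^{\top}\boldsymbol{x}$; orthonormality of the basis together with $\parallel\boldsymbol{x}\parallel=1$ forces $\sum_{p\in\underline{n}}c_{p}^{2}=1$. Substituting into the quadratic form and using $M\boldsymbol{x}_{p}=\lambda_{p}\boldsymbol{x}_{p}$ and $\boldsymbol{x}_{p}^{\top}\boldsymbol{x}_{q}=\delta_{pq}$ yields
\[
\boldsymbol{x}^{\top}M\boldsymbol{x}=\sum_{p\in\underline{n}}\lambda_{p}c_{p}^{2},
\]
so the quadratic form is exhibited as a convex combination of the eigenvalues with weights $c_{p}^{2}$.

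Next I would read off the two-sided bound directly from this representation. Since each weight $c_{p}^{2}\geq0$ and $\sum_{p\in\underline{n}}c_{p}^{2}=1$, bounding every $\lambda_{p}$ below by $\lambda_{1}$ and above by $\lambda_{n}$ gives
\[
\lambda_{1}=\lambda_{1}\sum_{p\in\underline{n}}c_{p}^{2}\leq\sum_{p\in\underline{n}}\lambda_{p}c_{p}^{2}\leq\lambda_{n}\sum_{p\in\underline{n}}c_{p}^{2}=\lambda_{n}.
\]
For the equality characterization I would rewrite the upper gap as $\lambda_{n}-\boldsymbol{x}^{\top}M\boldsymbol{x}=\sum_{p\in\underline{n}}(\lambda_{n}-\lambda_{p})c_{p}^{2}$, a sum of nonnegative terms; it vanishes if and only if $c_{p}=0$ whenever $\lambda_{p}<\lambda_{n}$, i.e. $\boldsymbol{x}$ lies in the eigenspace of $\lambda_{n}$, which is precisely the condition $M\boldsymbol{x}=\lambda_{n}\boldsymbol{x}$. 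The lower-bound equality is handled symmetrically using $\boldsymbol{x}^{\top}M\boldsymbol{x}-\lambda_{1}=\sum_{p\in\underline{n}}(\lambda_{p}-\lambda_{1})c_{p}^{2}$.

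Finally, to obtain the variational formulas over all nonzero $\boldsymbol{x}$, I would normalize: for $\boldsymbol{x}\neq\boldsymbol{0}$ the Rayleigh quotient equals $\boldsymbol{y}^{\top}M\boldsymbol{y}$ with the unit vector $\boldsymbol{y}=\boldsymbol{x}/\parallel\boldsymbol{x}\parallel$, so the bound just established gives $\lambda_{1}\leq\boldsymbol{x}^{\top}M\boldsymbol{x}/(\boldsymbol{x}^{\top}\boldsymbol{x})\leq\lambda_{n}$. Evaluating the quotient at $\boldsymbol{x}=\boldsymbol{x}_{n}$ returns $\lambda_{n}$ and at $\boldsymbol{x}=\boldsymbol{x}_{1}$ returns $\lambda_{1}$, so both bounds are actually attained and may be promoted to a maximum and a minimum, respectively.

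There is no substantive obstacle here, as this is the classical Rayleigh theorem; the only point demanding care is the equality analysis. One must argue that a sum of nonnegative terms vanishes exactly when each term does, and then translate the resulting support condition on the coefficients $c_{p}$ back into the eigenvector identity $M\boldsymbol{x}=\lambda_{n}\boldsymbol{x}$ (respectively $M\boldsymbol{x}=\lambda_{1}\boldsymbol{x}$), a step that is transparent when $\lambda_{n}$ (respectively $\lambda_{1}$) is simple but requires the eigenspace formulation when the extreme eigenvalue is repeated.
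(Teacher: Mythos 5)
Your proposal is correct and complete; note, though, that the paper itself offers no proof of this lemma, stating it as a known result with a citation to Horn and Johnson (\emph{Matrix Analysis}, p.~235). Your argument --- expanding a unit vector in the orthonormal eigenbasis, writing the quadratic form as the convex combination $\sum_{p}\lambda_{p}c_{p}^{2}$, extracting the two-sided bound, characterizing equality via the vanishing of the nonnegative terms $(\lambda_{n}-\lambda_{p})c_{p}^{2}$, and obtaining the variational formulas by normalization and attainment at $\boldsymbol{x}_{1},\boldsymbol{x}_{n}$ --- is precisely the standard textbook proof found in that reference, including the correct eigenspace (rather than single-eigenvector) treatment of the equality case when the extreme eigenvalue is repeated.
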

\textcolor{black}{}
\textcolor{black}{We also make the observation that when $M\in\mathbb{R}^{n\times n}$
is positive semi-definite matrix, $\boldsymbol{x}^{\top}M\boldsymbol{x}=0$
if and only if $M\boldsymbol{x}=\boldsymbol{0}$.}
\begin{lem}
\textcolor{black}{\label{lem:convergence lemma}\citet{su2011stability}
Let $\{t_{k}|k\in\mathbb{N}\}$ be a sequence such that $t_{k+1}-t_{k}\geq\alpha>0$
for all $k\in\mathbb{N}$ and $t_{0}=0$. Suppose $F(t)$: $[0,\infty)\rightarrow\mathbb{R}$
satisfies }

\textcolor{black}{1) $\underset{t\rightarrow\infty}{\text{{\bf lim}}}F(t)$
exists;}

\textcolor{black}{2) $F(t)$ is twice differentiable on each interval
$[t_{k},t_{k+1})$;}

\textcolor{black}{3) $\ddot{F}(t)$ is bounded for $t\geq0$.}
\end{lem}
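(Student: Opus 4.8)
The plan is to establish the (implicit) conclusion of the lemma, namely $\lim_{t\to\infty}\dot F(t)=0$, by a Barbalat-type argument adapted to the fact that $F$ is only assumed twice differentiable \emph{on each dwell interval} $[t_k,t_{k+1})$, not globally. The role of the uniform dwell-time lower bound $\triangle t_k\ge\alpha$ is precisely to compensate for this lack of global regularity. First I would set $M=\sup_{t\ge0}|\ddot F(t)|$, which is finite by hypothesis 3), and argue by contradiction: suppose $\dot F(t)$ does not converge to $0$. Then there exist $\epsilon>0$ and a sequence $s_n\to\infty$ with $|\dot F(s_n)|\ge\epsilon$ for all $n$. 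Each $s_n$ lies in exactly one dwell interval $[t_{k_n},t_{k_n+1})$, on the interior of which $\dot F$ is continuous and Lipschitz with the \emph{uniform} constant $M$, since $|\dot F(t)-\dot F(s)|\le M|t-s|$ whenever $t,s$ lie in the same interval.

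Next I would produce a fixed-length window inside the same dwell interval on which $|\dot F|$ stays bounded away from zero. Put $\delta=\min\{\alpha/2,\ \epsilon/(2(M+1))\}>0$. Because $t_{k_n+1}-t_{k_n}\ge\alpha\ge2\delta$, at least one of the one-sided windows $[s_n,s_n+\delta]$ or $[s_n-\delta,s_n]$ is contained in $[t_{k_n},t_{k_n+1})$; denote it $[a_n,b_n]$ with $b_n-a_n=\delta$. On this window the Lipschitz bound gives, for every $t\in[a_n,b_n]$, $|\dot F(t)|\ge|\dot F(s_n)|-M\delta\ge\epsilon-\epsilon/2=\epsilon/2$, and by continuity $\dot F$ retains a constant sign there. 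Hence, using the fundamental theorem of calculus on the single dwell interval,
\[
|F(b_n)-F(a_n)|=\left|\int_{a_n}^{b_n}\dot F(t)\,dt\right|=\int_{a_n}^{b_n}|\dot F(t)|\,dt\ge\frac{\epsilon}{2}\,\delta>0.
\]

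Finally, since $\lim_{t\to\infty}F(t)$ exists by hypothesis 1) and $a_n,b_n\to\infty$, the left-hand side $F(b_n)-F(a_n)$ tends to $0$, contradicting the strictly positive lower bound $\epsilon\delta/2$. This contradiction forces $\dot F(t)\to0$, which is the assertion applied in the main text (there with $F$ taken to be a component of $\boldsymbol{x}$, yielding $\lim_{t\to\infty}\dot{\boldsymbol{x}}_i(t)=\boldsymbol{0}$).

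The main obstacle I anticipate is the bookkeeping at the switching instants $t_k$: the Lipschitz estimate for $\dot F$ is valid only within a single dwell interval, so the window $[a_n,b_n]$ must be chosen so as not to straddle a switching time. This is exactly where the dwell-time bound $\alpha$ enters, guaranteeing that every dwell interval is long enough ($\ge2\delta$) to host a one-sided window of the required fixed length; the choice $\delta=\min\{\alpha/2,\ \epsilon/(2(M+1))\}$ is therefore the crux of the argument. Writing $M+1$ rather than $M$ is a minor device covering the degenerate case $\ddot F\equiv0$, in which $\dot F$ is piecewise constant and the same integral estimate still goes through.
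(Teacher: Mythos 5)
Your proof is correct: the contradiction argument via a uniform Lipschitz bound on $\dot F$ within each dwell interval, a fixed-length window of size $\delta=\min\{\alpha/2,\,\epsilon/(2(M+1))\}$ that avoids straddling a switching instant, and the resulting lower bound $|F(b_n)-F(a_n)|\ge\epsilon\delta/2$ contradicting the convergence of $F$ is exactly the standard Barbalat-type argument for piecewise-smooth signals with a dwell-time bound. The paper itself states this lemma without proof, citing \citet{su2011stability}, so there is no in-paper argument to compare against; your self-contained derivation fills that gap and correctly handles the only delicate points (keeping the window inside one dwell interval, which is where $\alpha$ is used, and the degenerate case $\ddot F\equiv 0$).
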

\textcolor{black}{Then $\underset{t\rightarrow\infty}{\text{{\bf lim}}}\dot{F}(t)=0$.}

\bibliographystyle{elsarticle-harv}
\addcontentsline{toc}{section}{\refname}\bibliography{mybib}

\end{document}